\documentclass{article}
\usepackage[latin1]{inputenc}
\usepackage[T1]{fontenc}
\usepackage{amssymb,amscd}
\usepackage{amsfonts}
\usepackage{amsmath}
\usepackage{xspace}
\usepackage{enumerate}
\usepackage{stmaryrd}
\usepackage{color}
\usepackage{graphicx}
\usepackage{subfigure}
\newtheorem{proposition}{Proposition}[section]
\newtheorem{remark}{Remark}
\newtheorem{theorem}{Theorem}[section]
\newtheorem{lemma}[theorem]{Lemma}
\newtheorem{corollary}[theorem]{Corollary}
\newtheorem{definition}[theorem]{Definition}

\newcommand{\qed}{\hfill$\square$}
\newenvironment{proof}{\par\noindent\textit{Proof.~}}{\qed\par\bigskip}
\newenvironment{keywords}{\bigskip\noindent\textbf{Keywords:} }{\par}
\newcounter{exnum}\setcounter{exnum}{0}
\newenvironment{example}{\refstepcounter{exnum}%
                          \begin{trivlist}
                            \item[\hskip\labelsep{\bf Example
                                                      \arabic{exnum}}]}%
                         {\hfill\hbox{$\quad\Box$}\end{trivlist}}
\newcommand{\cF}{{\mathcal{F}}}
\newcommand{\ccL}{{\mathcal{L}}}
\newcommand{\z}{\mathbb{Z}}
\newcommand{\Z}{\z}
\newcommand{\Zt}{\widetilde{\z}}
\newcommand{\n}{\mathbb{N}}
\newcommand{\re}{{\mathbb R}}
\newcommand{\N}{\n}
\newcommand{\az}{\ensuremath{A^{\Z^d}}}
\newcommand{\C}{\ensuremath{\mathcal{C}}\xspace}
\newcommand{\B}{\ensuremath{\mathcal{B}}\xspace}

\newcommand{\cst}[1]{\ensuremath{\underline{#1}}}
\newcommand{\obrx}[1]{\beta_r^0(x_{#1})}

\newcommand{\obri}[1]{\beta_r^{x_i}(x_{#1})}
\newcommand{\obrj}[1]{\beta_r^{x_j}(x_{#1})}

\newcommand{\obmx}[1]{\beta_m^0(x_{#1})}

\newcommand{\oblFx}[1]{\beta_l^0(F(x)_{#1})}
\newcommand{\obry}[1]{\beta_r^0(y_{#1})}

\newcommand{\obrjy}[1]{\beta_r^{y_j}(y_{#1})}

\newcommand{\obmy}[1]{\beta_m^0(y_{#1})}

\newcommand{\oblFy}[1]{\beta_l^0(F(y)_{#1})}
\newcommand{\sa}{\ensuremath{{\mathcal{A}}}\xspace}
\renewcommand{\S}{\ensuremath{{\mathcal{S}}}\xspace}
\newcommand{\A}{\sa}
\newcommand{\SN}{\ensuremath{{\mathcal{N}}}\xspace}
\newcommand{\fSN}{\ensuremath{f_{\mathcal{N}}}\xspace}
\newcommand{\FSN}{\ensuremath{F_{\mathcal{N}}}\xspace}
\newcommand{\Rrd}{\ensuremath{\mathcal{R}^d_r}\xspace}
\newcommand{\Md}{\ensuremath{\mathcal{M}^d}\xspace}
\newcommand{\Mdd}{\ensuremath{\mathcal{M}^{d+1}}\xspace}
\newcommand{\Mrd}{\ensuremath{\mathcal{M}^d_{\mathbf{2r+1}}}\xspace}
\newcommand{\Mr}{\ensuremath{M_{\mathbf{2r+1}}}\xspace}
\newcommand{\rr}{{\mathbf r}}
\newcommand{\uno}{{\mathbf 1}}
\newcommand{\Mhd}{\ensuremath{\mathcal{M}^d_h}\xspace}
\newcommand{\ball}[1]{\left[#1\right]}

\renewcommand{\llbracket}{[} 
\renewcommand{\rrbracket}{]} %
\newcommand{\inter}[2]{\left\llbracket #1,#2 \right\rrbracket}
\newcommand{\intinf}[2]{\widetilde{\inter{#1}{#2}}}
\newcommand{\dist}{\mathsf{d}}
\newcommand{\oldd}{\dist^\prime}
\newcommand{\oldc}{{C^{\prime}}}
\newcommand{\uple}[1]{\left\langle #1 \right\rangle}
\newcommand{\set}[1]{\left\{#1\right\}}
\newcommand{\parto}[1]{\left ( #1 \right )}
\newcommand{\para}[1]{\left\langle#1\right\rangle}
\newcommand{\nv}{0}
\newcommand{\Ml}{\ensuremath{M_{\mathbf{2l+1}}}\xspace}
\newcommand{\lu}{{\mathbf l}}

\newcommand{\ignore}[1]{}
\newcommand{\ie}{i.e.\xspace}
\newcommand{\TODO}[2][]{\textcolor{red}{TODO #1 $\Big\langle$}#2\textcolor{red}{$\Big\rangle$}}
\newcommand{\NEW}[1]{#1}
\newcommand{\NEWa}[1]{#1}

\title{A compact topology for Sand Automata\protect\thanks{This work has been supported
by the Interlink/MIUR project ``Cellular Automata: Topological
Properties, Chaos and Associated Formal Languages'', by the ANR
Blanc ``Projet Sycomore'' and by the PRIN/MIUR project ``Formal
Languages and Automata: Mathematical and Applicative Aspects''.}~~\protect\thanks{Some of the results of this paper have been submitted at JAC 2008 and IFIP TCS 2008 conferences.}}
\author{Alberto Dennunzio\footnote{Universit\`a degli studi di Milano-Bicocca,
Dipartimento di Informatica Sistemistica e Comunicazione,  via
Bicocca degli Arcimboldi 8, 20126 Milano (Italy).\protect\\Email: \texttt{dennunzio@disco.unimib.it}} \and Pierre Guillon\footnote{Universit\'e Paris-Est, Laboratoire d'Informatique de l'Institut Gaspard Monge, UMR CNRS 8049, 5 bd Descartes,
 77 454 Marne la Vall\'ee Cedex 2 (France).\protect\\Email: \texttt{pierre.guillon@univ-mlv.fr}} \and Beno\^it Masson\footnote{Laboratoire d'Informatique Fondamentale de Marseille
 (LIF)-CNRS, Aix-Marseille Universit\'e,
 39 rue Joliot-Curie, 13 453 Marseille Cedex 13
 (France).\protect\\Email: \texttt{benoit.masson@lif.univ-mrs.fr}}}
\date{}

\begin{document}

\maketitle

\begin{abstract}
  In this paper, we exhibit a strong relation between the sand
  automata configuration space and the cellular automata configuration
  space. \NEWa{This relation induces a compact topology
    for sand automata, and a new context in which sand automata are
    homeomorphic to cellular automata acting on a specific subshift.
    We show that the existing topological results for sand automata,
    including the Hedlund-like representation theorem, still hold.  In
    this context, we give a characterization of the cellular automata
    which are sand automata, and study some dynamical behaviors such as
    equicontinuity.}
  Furthermore, we deal with the nilpotency. We show that the classical
  definition is not meaningful for sand automata. Then, we introduce a
  suitable new notion of nilpotency for sand automata. Finally, we
  prove that this simple dynamical behavior is undecidable.
\end{abstract}
\begin{keywords}
sand automata, cellular automata, dynamical systems, subshifts, nilpotency, undecidability
\end{keywords}

\section{Introduction}
Self-organized criticality (SOC) is a common phenomenon observed
in a huge variety of processes in physics, biology and computer
science. A SOC system evolves to a ``critical state'' after some
finite transient. Any perturbation, no matter how small, of the
critical state generates a deep reorganization of the whole
system. Then, after some other finite transient, the system
reaches a new critical state and so on. Examples of SOC systems
are: sandpiles, snow avalanches, star clusters in the outer space,
earthquakes, forest fires, load balance in operating systems
\cite{bak97,bak88,bak89,bak90,subramanian94}. Among them,
sandpiles models are a paradigmatic formal model for SOC systems
\cite{goles93,GMP}.

In \cite{CF03}, the authors introduced sand automata as a
generalization of sandpiles models and transposed them in the
setting of discrete dynamical systems. A key-point of \cite{CF03}
was to introduce a (locally compact) metric topology to study the
dynamical behavior of sand automata. A first and important result
was a fundamental representation theorem similar to the well-known
Hedlund's theorem for cellular automata~\cite{hedlund69,CF03}.
In~\cite{cervelle05,CFM07}, the authors investigate sand automata
by dealing with some basic set properties and decidability issues.

In this paper we continue the study of sand automata. First of all, we
introduce a different metric on configurations (\ie spatial
distributions of sand grains). This metric is defined by means of the
relation between sand automata and cellular automata~\cite{CFM07}.
With the induced topology, the configuration set turns out to be a
compact (and not only locally compact), perfect and totally
disconnected space. The ``strict'' compactness gives a better
topological background to study the behavior of sand automata (and in
general of discrete dynamical systems). In fact, compactness provides
a lots of very useful results which help in the investigation of
several dynamical properties~\cite{ak93,ku04}. We show that all the
topological results from~\cite{CF03} still hold, in particular the
Hedlund-like representation theorem remains valid with the compact
topology. \NEWa{Moreover, with this topology, any sand automaton is
  homeomorphic to a cellular automaton defined on a subset of its
  usual domain. We prove that it is possible to decide whether a given
  cellular automaton is in fact a sand automaton. Besides, this
  relation helps to prove some properties about the dynamical behavior
  of sand automata, such as the equivalence between equicontinuity and
  ultimate periodicity.}

Then, we study nilpotency of sand automata. The classical
definition of nilpotency for cellular automata~\cite{CPY89,K92} is
not meaningful, since it prevents any sand automaton from being
nilpotent. Therefore, we introduce a new definition which captures
the intuitive idea that a nilpotent automaton destroys all the
configurations: a sand automaton is nilpotent if all
configurations get closer and closer to a uniform configuration,
not necessarily reaching it. Finally, we prove that this behavior
is undecidable.

\NEWa{The paper is structured as follows. First, in Section~\ref{sec:def},
we recall basic definitions and results about cellular automata and
sand automata. Then, in Section~\ref{sec:topo}, we define a compact
topology and we prove some topological results, in particular the
representation theorem. Finally, in Section~\ref{sec:nil}, nilpotency
for sand automata is defined and proved undecidable.}

\ignore{
\begin{enumerate}
\item X will be a compact metric space most of the time. Note that
  Baire Category Theorem has life on compact metric spaces. This
  Theorem is invariably invoked to get results of the following form:
  the set of points satisfying such and such dynamical properties is a
  dense $G_\delta$.
\item Baire Theorem...
\item we began the study of aleph-Cellular Automata, \ie CA whose
  cells are defined on an infinite set of states. The study of these
  DTDS is complicated by the fact that the state space is not compact.
\end{enumerate}
}
\section{Definitions}\label{sec:def}

For all $a,b\in\Z$ with $a\leq b$, let
$\inter{a}{b}=\set{a,a+1,\ldots,b}$ and
$\intinf{a}{b}=\inter{a}{b}\cup\set{+\infty,-\infty}$. For $a \in \Z$,
let $[a, +\infty) = \set{a, a+1, \ldots} \setminus \set{+\infty}$.
Let $\n_+$ be the set of positive integers. For a vector $i\in\Z^d$,
denote by $|i|$ the infinite norm of $i$.

\NEWa{Let $A$ a (possibly infinite) alphabet and $d\in\N^*$. Denote by $\Md$ the set of all the $d$-dimensional matrices with
values in $A$. We assume that the
entries of any matrix $U\in\Md$ are all the integer vectors of a
suitable $d$-dimensional hyper-rectangle $[1,h_1]\times\cdots\times
[1,h_d]\subset\n_+^d$. For any $h=(h_1,\ldots,h_d)\in\n_+^d$, let
$\Mhd\subset \Md$ be the set of all the matrices with entries in
$[1,h_1]\times\cdots\times [1,h_d]$.  In the sequel, the vector $h$
will be called the \emph{order} of the matrices belonging to $\Mhd$. For a
given element $x\in\az$, the \emph{finite portion} of $x$ of
reference position $i\in\Z^d$ and order $h\in\n_+^d$ is the matrix
$M^i_h(x)\in\Mhd$ defined as $\forall k\in [1,h_1]\times\cdots\times
[1,h_d]$, $M^i_h(x)_k=x_{i+k-\uno}$.
For any $r\in\n$, let $\rr^d$ (or simply $\rr$ if the dimension is not ambiguous) be the vector $(r,\ldots,r)$.
\ignore{ Using this notation, we
  define for any $r\in\n$ the \emph{neighborhood} of $x$ with
  center $i$ and radius $r$, as the matrix $\Mr^{i-\rr}(x)\in\Mrd$.}}

\subsection{Cellular automata and subshifts}
%
Let $A$ be a finite alphabet.  A \emph{CA configuration} of
dimension $d$ is a function from $\z^d$ to $A$. The set $\az$ of
all the CA configurations is called the \emph{CA configuration
space}.
This space is usually equipped with the Tychonoff metric $\dist_T$
defined by
\[
  \forall x,y\in\az,\quad\dist_T(x,y)=2^{-k}\quad\text{where}\quad k=\min\set{|j|: j\in \Z^d, x_j\ne y_j}\enspace.
\]
The topology induced by $\dist_T$ coincides with the product topology
induced by the discrete topology on $A$.  \NEWa{With this topology,
  the CA configuration space is a Cantor space: it is compact,
  perfect (\ie, it has no isolated points) and totally disconnected.}

\smallskip

\NEW{For any $k\in\Z^d$ the \emph{shift map} $\sigma^k: A^{\Z^d} \to
  A^{\Z^d}$ is defined by $\forall x\in A^{\Z^d}, \forall i\in\Z^d$,
  $\sigma^k(x)_i=x_{i+k}$. A function $F:A^{\Z^d} \to A^{\Z^d}$ is
  said to be \emph{shift-commuting} if $\forall k\in\Z^d$,
  $F\circ\sigma^k=\sigma^k\circ F$.}

A $d$-dimensional \emph{subshift} $S$ is a closed subset of the CA
configuration space $A^{\Z^d}$ which is shift-invariant,
\ie for any $k \in \Z^d$, $\sigma^k(S)
\subset S$. Let $\cF\subseteq\Md$ and let $S_{\cF}$ be the set of
configurations $x\in A^{\Z^d}$ such that all possible finite portions
of $x$ do not belong to $\cF$, \ie for any $i,h\in\Z^d$,
$M^i_h(x)\notin\cF$. The set $S_{\cF}$ is a subshift, and $\cF$ is
called its set of forbidden patterns. \NEWa{Note that for any subshift
  $S$, it is possible to find a set of forbidden patterns $\cF$ such
  that $S=S_\cF$}. A subshift $S$ is said to be a \emph{subshift of
  finite type} (SFT) if $S=S_{\cF}$ for some finite set $\cF$. The
language of a subshift $S$ is $\ccL(S)=\set{U\in\Md: \exists
  i\in\Z^d,h\in\N_+^d,x\in S,M_h^i(x)=U}$ (for more on subshifts,
see~\cite{LiMa95} for instance).

\smallskip

A \emph{cellular automaton} is a quadruple $\uple{A, d, r, g}$, where
$A$ is the alphabet also called the \emph{state set}, $d$ is the
dimension, $r\in\n$ is the \emph{radius} and $g: \Mrd \to A$ is the
\emph{local rule} of the automaton. The local rule $g$ induces a
\emph{global rule} $G:\az\to\az$ defined as follows,
  \[
  \forall x\in \az,\,\forall i\in\Z^d,\quad G(x)_i=
    g\big(\Mr^{i-\rr}(x)\big)\enspace .
  \]
Note that CA are exactly the class of all shift-commuting functions
which are (uniformly) continuous with respect to the Tychonoff
metric (Hedlund's theorem from~\cite{hedlund69}). For the sake of
simplicity, we will make no distinction between a CA and its
global rule $G$.

The local rule $g$ can be extended naturally to all finite
matrices in the following way. With a little abuse of notation,
for any $h\in[2r+1,+\infty)^d$ and any $U\in\Mhd$, define $g(U)$ as the
matrix obtained by the simultaneous application of $g$ to all the
$\Mrd$ submatrices of $U$. Formally, $g(U)=M^\rr_{h-2\rr}(G(x))$,
where $x$ is any configuration such that $M^0_h(x)=U$.

For a given CA, a state $s\in A$ is \emph{quiescent} (resp.,
\emph{spreading}) if for all matrices $U\in\Mrd$ such that $\forall
k\in [1,2r+1]^d$, (resp., $\exists k\in [1,2r+1]^d$) $U_k=s$, it holds
that $g(U) = s$.  Remark that a spreading state is also quiescent. A
CA is said to be spreading if it has a spreading state. In the sequel,
we will assume that for every spreading CA the spreading state is
$0\in A$.

\subsection{SA Configurations}


A \emph{SA configuration} (or simply \emph{configuration}) is a set of
sand grains organized in piles and distributed all over the
$d$-dimensional lattice $\z^d$. A \emph{pile} is represented either by
an integer from $\Z$ (\emph{number of grains}), or by the value
$+\infty$ (\emph{source of grains}), or by the value $-\infty$
(\emph{sink of grains}), \ie it is an element of $\Zt=\Z\cup\{-\infty,
+\infty\}$. One pile is positioned in each point of the lattice
$\z^d$. Formally, a configuration $x$ is a function from $\Z^d$ to
$\Zt$ which associates any vector $i=(i_1, \ldots, i_d) \in \Z^d$ with
the number $x_i\in\Zt$ of grains in the pile of position $i$. When the
dimension $d$ is known without ambiguity we note $0$ the null vector
of $\Z^d$. Denote by $\C=\Zt^{\Z^d}$ the set of all configurations.
A configuration $x\in \C$ is said to be \emph{constant} if there
is an integer $c \in \Z$ such that for any vector $i \in \Z^d$,
$x_i = c$. In that case we write $x = \cst{c}$. A configuration
$x\in\C$ is said to be \emph{bounded} if there exist two integers
$m_1, m_2 \in \Z$ such that for all vectors $i \in \Z^d$, $m_1 \leq x_i
\leq m_2$. Denote by \B the set of all bounded configurations.
\smallskip

A \emph{measuring device} $\beta_r^m$ of precision $r\in\N$ and
reference height $m\in\Z$ is a function from $\Zt$ to $\intinf{-r}{r}$
defined as follows
\begin{equation*}\label{measdev}
  \forall n\in\Zt, \quad \beta_r^m(n)=
    \left\{\begin{array}{l@{\hspace{5mm}}l}
      +\infty &\text{if $n > m + r$}\enspace,\\
      -\infty &\text{if $n < m - r$}\enspace,\\
      n - m   & \text{otherwise.}
    \end{array}\right.
\end{equation*}
A measuring device is used to evaluate the relative height of two
piles, with a bounded precision. This is the technical basis of
the definition of cylinders, distances and ranges which are used
all along this article.
\smallskip

In \cite{CF03}, the authors equipped $\C$ with a metric in such a way
that two configurations are at small distance if they have the same
number of grains in a finite neighborhood of the pile indexed by the
null vector. The neighborhood is individuated by putting the measuring
device 
at the top of the pile, if this latter contains a finite number of
grains. Otherwise the measuring device is put at height $0$. In order
to formalize this distance, the authors introduced the notion of
\emph{cylinder}, that we rename \emph{top cylinder}. For any
configuration $x\in \C$, for any $r\in\n$, and for any $i\in\z^d$, the
top cylinder of $x$ centered in $i$ and of radius $r$ is the
$d$-dimensional matrix \NEWa{$\oldc^i_r(x) \in \Mrd$ defined on the
  infinite alphabet $A=\Zt$ by}
\[
  \forall k \in \inter{1}{2r+1}^d, \; \left(\oldc^i_r(x)\right)_k =
  \left\{\begin{array}{l@{\hspace{3mm}}l@{}}
       x_i &\text{if $k=r+1$}\enspace,\\
      \obri{i+k-r-1} &\text{if $k\neq r+1$ and $x_i\neq\pm\infty$}\enspace,\\
      \obrx{i+k-r-1}  & \text{otherwise.}
    \end{array}\right.
\]
In dimension $1$ and for a configuration $x\in\C$, we have
\[  \oldc^i_r(x) = \parto{\obri{i-r}, \ldots, \obri{i-1},
  x_{i}, \obri{i+1}, \ldots, \obri{i+r}}
\]
if $x_i\neq\pm\infty$, while
\[
  \oldc^i_r(x) = \parto{\obrx{i-r}, \ldots, \obrx{i-1}, x_{i},
  \obrx{i+1}, \ldots, \obrx{i+r}}
\]
if $x_i=\pm\infty$.
\smallskip\\
By means of top cylinders, the distance $\oldd:\C\times\C\to
\re_+$ has been introduced as follows:
\begin{equation*}
\forall x,y\in\C, \quad \oldd(x,y)=2^{-k} \quad \text{where}\quad
k=\min\set{r\in\n: \oldc^0_r(x)\neq \oldc^0_r(y)} \enspace.
\end{equation*}
\begin{proposition}[\cite{CF03,CFM07}]
  With the topology induced by $\oldd$, the configuration space is
  locally compact, perfect and totally disconnected.
\end{proposition}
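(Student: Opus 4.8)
The plan is to verify the three properties separately, using throughout the fact that $\oldd$ is an ultrametric. For every radius $r$, the top cylinder $\oldc^0_r$ is a deterministic function of the configuration, and knowing $\oldc^0_r(x)$ determines $\oldc^0_s(x)$ for every $s\le r$, since the measuring device $\beta_s^m$ is a coarsening of $\beta_r^m$. Hence agreement of top cylinders at a fixed radius is transitive, which yields the strong triangle inequality $\oldd(x,z)\le\max(\oldd(x,y),\oldd(y,z))$.

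For total disconnectedness, I would note that each ball $\set{y\in\C:\oldc^0_r(y)=\oldc^0_r(x)}$ is clopen: open as a metric ball, and closed because in an ultrametric space its complement is a union of balls of the same radius. These balls form a basis, so the only connected subsets are singletons; given $x\ne y$ with $\oldd(x,y)=2^{-k}$, the clopen ball of radius $2^{-k}$ around $x$ separates them. For perfectness, given $x\in\C$ and $r\in\n$, I would produce $y\ne x$ with $\oldc^0_r(y)=\oldc^0_r(x)$ by changing the value of a single pile at a position $i$ with $|i|>r$. Since $\oldc^0_r$ only reads the piles at positions of infinite norm at most $r$, the top cylinder of radius $r$ is unchanged, so $\oldd(x,y)\le 2^{-(r+1)}$ while $y\ne x$; letting $r\to\infty$ shows that $x$ is not isolated.

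The main work, and the expected obstacle, is local compactness. I would show that $K=\set{y\in\C:y_0=x_0}$ is a compact neighborhood of $x$. It is an \emph{open} neighborhood because it is exactly the open ball $\set{y:\oldd(x,y)<1}$: indeed $\oldd(x,y)<1$ forces $\oldc^0_0(x)=\oldc^0_0(y)$, that is $x_0=y_0$, since $\oldc^0_0(x)=x_0$. For compactness, the crucial point is that once the central value $y_0=x_0$ is pinned, the non-central entries of $\oldc^0_s(y)$ range over the finite set $\intinf{-s}{s}$, so $\oldc^0_s$ takes only finitely many values on $K$ for each $s$. Given a sequence in $K$, a diagonal extraction then yields a subsequence along which $\oldc^0_s$ is eventually constant for every $s$.

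The resulting limit cylinders are coherent, and I would check that a coherent family of top cylinders with fixed central value determines a unique configuration $y\in K$: for each position $j$ the relative height $\beta_s^{x_0}(y_j)$ stabilizes as $s\to\infty$ and thereby recovers $y_j\in\Zt$ (equal to $x_0+v$ if the stable value is a finite $v$, and $\pm\infty$ if it is $\pm\infty$). The extracted subsequence then converges to this $y$, so $K$ is sequentially compact, hence compact, and every point has a compact neighborhood. I expect the only delicate points to be this reconstruction and the harmless adaptation to the alternative reference height $0$ when $x_0=\pm\infty$. Pinning the central value is essential here: the values $x_0\in\Zt$ are recorded by $\oldd$ with the discrete topology, and this infinite discrete set is precisely what prevents $\C$ from being globally compact while still allowing local compactness.
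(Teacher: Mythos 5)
Your proof is correct. Note first that the paper itself gives no proof of this proposition: it is quoted from \cite{CF03,CFM07}, so there is no line-by-line comparison to make. The closest the paper comes is its treatment of the analogous properties for the \emph{new} metric $\dist$ (Propositions~\ref{prop:C-comp-etc} and~\ref{prop:C-perfect}), and there the method is genuinely different from yours: everything is transferred through the isometric homeomorphism $\zeta$ between $(\C,\dist)$ and the subshift $S_K\subseteq\set{0,1}^{\Z^{d+1}}$, so compactness, total disconnectedness and clopenness of balls come for free from the Cantor space, and only perfectness requires a hands-on argument (the same one-site modification you use). Your direct route buys two things that the transfer argument cannot: it applies to $\oldd$ at all (w.r.t.\ $\oldd$ the map $\zeta$ is \emph{not} a homeomorphism onto $S_K$, since $(\C,\oldd)$ is not compact), and it isolates exactly where global compactness fails -- the central pile is read by $\oldd$ with the discrete topology on the infinite set $\Zt$, so $\C$ splits into the infinitely many disjoint clopen sets $\pi_0^{-1}(\set{a})$, $a\in\Zt$, at mutual distance $1$, each of which your pinned-center diagonal argument shows to be compact. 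The key technical points (the coarsening property of $\beta^m_s$ relative to $\beta^m_r$ for $s\le r$, which makes $\oldd$ an ultrametric and the family of cylinders coherent; and the reconstruction of a configuration from a coherent family of top cylinders, with the readings either stabilizing at a finite value or staying $\pm\infty$ forever) are all handled correctly, including the degenerate reference height when $x_0=\pm\infty$. Incidentally, the compact sets you produce are precisely the sets $P_a$ whose compactness (for $\dist$) the paper invokes in Lemma~\ref{lem:l}, so your argument also serves as a self-contained proof of that assertion in the $\oldd$ setting.
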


\subsection{Sand automata}
For any integer $r \in \N$, for any configuration $x \in \C$ and
any index $i \in \Z^d$ with $x_i\neq\pm\infty$, the \emph{range} of center $i$ and radius $r$
is the $d$-dimensional matrix \NEWa{$R^i_r(x) \in \Mrd$ on the finite
  alphabet $A = \intinf{-r}{r} \cup \bot$ such that}
\[
  \forall k \in \inter{1}{2r+1}^d, \quad \left(R^i_r(x)\right)_k =
  \left\{\begin{array}{l@{\hspace{5mm}}l}
    \bot & \text{if $k=r+1$}\enspace, \\
    \beta^{x_i}_r(x_{i+k-r-1}) & \text{otherwise.}
  \end{array}\right.
\]
The range is used to define a sand automaton. It is a kind of top
cylinder, where the observer is always located on the top of the
pile $x_i$ (called the \emph{reference}). It represents what the
automaton is able to see at position $i$. Sometimes the central
$\bot$ symbol may be omitted for simplicity sake. The set of all
possible ranges of radius $r$, in dimension $d$, is denoted by
\Rrd.
\smallskip

A \emph{sand automaton} (SA) is a deterministic finite automaton
working on configurations. Each pile is updated synchronously,
according to a local rule which computes the variation of the pile
by means of the range. Formally, a SA is a
triple $\uple{d, r, f}$, where $d$ is the dimension, $r$ is the
\emph{radius} and $f: \Rrd \to \inter{-r}{r}$ is the \emph{local
rule} of the automaton. By means of the local rule, one can define
the \emph{global rule} $F: \C \to \C$ as follows
\[
\forall x\in\C,\,\forall i\in\Z^d,\quad F(x)_i=
\left\{\begin{array}{l@{\hspace{5mm}}l}
    x_i & \text{if $x_i=\pm\infty$}\enspace,\\
    x_i+f(R_r^i(x)) & \text{otherwise.}
  \end{array}\right.
\]
Remark that the radius $r$ of the automaton has three different
meanings: it represents at the same time the number of measuring
devices in every dimension of the range (number of piles in the
neighborhood), the precision of the measuring devices in the range,
and the highest return value of the local rule (variation of a pile).
It guarantees that there are only a finite number of ranges and return
values, so that the local rule has finite description.

The following example illustrates a sand automaton whose behavior
will be studied in Section \ref{sec:nil}. For more examples, we
refer to~\cite{CFM07}.
\begin{example}[the automaton \SN]\label{ex:N}
This automaton destroys a configuration by collapsing all piles
towards the lowest one. It decreases a pile when there is a lower
pile in the neighborhood (see Figure~\ref{fig:N}). Let $\SN =
\para{1, 1, \fSN}$ of global rule $\FSN$ where
\[
  \forall a, b \in \intinf{-1}{1}, \quad \fSN(a, b) = \left\{
  \begin{array}{r@{\hspace{5mm}}l}
    -1 & \textrm{if $a < 0$ or $b < 0$}\enspace,\\
    0 & \textrm{otherwise.}
  \end{array}\right.
\]
\end{example}
\begin{figure}[!ht]
  \begin{center}
  \includegraphics[scale=0.7]{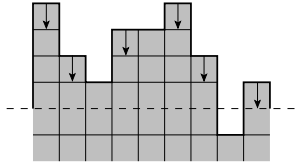}
  \caption{Illustration of the behavior of \SN.}
  \label{fig:N}
  \end{center}
\end{figure}
When no misunderstanding is possible, we identify a SA with its global
rule $F$. For any $k\in\Z^d$, we extend the definition of the
\emph{shift map} to \C, $\sigma^k:\C\to\C$ is defined by $\forall
x\in\C, \forall i\in\Z^d$, $\sigma^k(x)_i=x_{i+k}$. The \emph{raising
  map} $\rho: \C \to \C$ is defined by $\forall x\in\C, \forall i \in
\Z^d$, $\rho(x)_i = x_i + 1$. A function $F:\C\to\C$ is said to be
\emph{vertical-commuting} if $F\circ\rho=\rho\circ F$. A
function $F:\C\to\C$ is \emph{infinity-preserving} if for any
configuration $x\in\C$ and any vector $i\in\Z^d$, $F(x)_i=+\infty$ if
and only if $x_i=+\infty$ and $F(x)_i=-\infty$ if and only if
$x_i=-\infty$.

Remark that the raising map $\rho$ is the
  sand automaton of radius~$1$ whose local rule always returns~$1$. On
  the opposite, the horizontal shifts $\sigma_i$ are not sand
  automata: they destroy infinite piles by moving them, which is not
  permitted by the definition of the global rule.

\begin{theorem}[\cite{CF03,CFM07}]
  The class of SA is exactly the class of shift and
  vertical-commuting, infi\-ni\-ty-pre\-ser\-ving functions
  $F:\C\to\C$ which are continuous w.r.t. the metric $\dist^\prime$.
\end{theorem}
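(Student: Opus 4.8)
The plan is to prove the two inclusions separately. The inclusion of sand automata into the stated class is routine; the converse is the substantial part and rests on a compactness argument performed on a well-chosen slice of \C.

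For the direct inclusion, let $F=\para{d,r,f}$ be a sand automaton. Infinity-preservation is immediate from the definition of the global rule, and shift-commutation follows from the identity $R^i_r(\sigma^k(x))=R^{i+k}_r(x)$. Vertical-commutation holds because the measuring devices $\beta^{x_i}_r$ read only relative heights, so that $R^i_r(\rho(x))=R^i_r(x)$; hence $f$ returns the same variation on $\rho(x)$ as on $x$ and $F(\rho(x))_i=\rho(x)_i+f(R^i_r(x))=\rho(F(x))_i$. For continuity, to determine $\oldc^0_s(F(x))$ it suffices to know $\oldc^0_{s'}(x)$ for $s'\geq s+2r$: for each $k$ with $|k|\leq s$, either $|x_k-x_0|\leq s'$, in which case the top cylinder determines $R^k_r(x)$ and thus the height of $F(x)_k$ relative to $F(x)_0$, or $|x_k-x_0|>s'$, in which case $|F(x)_k-F(x)_0|>s'-2r\geq s$ and the corresponding entry of $\oldc^0_s(F(x))$ is $\pm\infty$ regardless of the precise local data. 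This yields (uniform) continuity.

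For the converse, let $F$ be shift- and vertical-commuting, infinity-preserving and continuous; the goal is to produce a radius $r$ and a local rule $f:\Rrd\to\inter{-r}{r}$. Since $F$ is infinity-preserving, only finite piles change, so I define the variation $g_0(x)=F(x)_0-x_0$ on $\set{x\in\C:x_0\text{ finite}}$; it is integer-valued, and as a difference of two continuous center-evaluations it is continuous there. Vertical-commutation gives $g_0\circ\rho=g_0$, while shift-commutation gives $F(x)_i-x_i=g_0(\sigma^i(x))$, so it is enough to understand $g_0$ on the slice $K=\set{x\in\C\;:\;x_0=0}$ and to transport it by $\rho$ and the $\sigma^i$. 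The key step is that $K$ is compact: unlike the whole space, where a sequence may escape through its (finite) central value, on $K$ the centre is pinned to $0$, so every entry $x^{(n)}_j$ lives in the compact set $\Zt$ and a diagonal extraction over $j\in\Z^d$ produces a subsequence along which each $\beta^0_r(x^{(n)}_j)$ stabilises; the resulting limit lies in $K$ and is the $\oldd$-limit of the subsequence.

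Compactness of $K$ makes $g_0|_K$ bounded, say by $c$, and uniformly continuous: there is $r_1$ with $\oldc^0_{r_1}(x)=\oldc^0_{r_1}(y)\Rightarrow g_0(x)=g_0(y)$ for $x,y\in K$. On $K$ the top cylinder $\oldc^0_{r_1}$ carries exactly the information of the range $R^0_{r_1}$ (the centre being pinned to $0$), so $g_0$ factors through $R^0_{r_1}$. Setting $r=\max(r_1,c)$, the range $R^0_r$ refines $R^0_{r_1}$ and $|g_0|\leq c\leq r$, so I can define $f:\Rrd\to\inter{-r}{r}$ by $f(R^0_r(x))=g_0(x)$; every element of \Rrd is realised by some $x\in K$, and well-definedness follows from the factorization. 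Finally, using that ranges are raising-invariant and that $g_0$ is raising- and shift-invariant, one checks $f(R^i_r(x))=F(x)_i-x_i$ at every finite pile, so the sand automaton $\para{d,r,f}$ has global rule $F$. The main obstacle is thus the compactness of the slice $K$, which is exactly what allows the local, $x$-dependent continuity to be upgraded to a single uniform radius $r$.
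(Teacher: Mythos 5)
Your converse direction --- the substantial half --- is correct, and it is essentially the same argument the paper gives for its own version of this characterization, namely Theorem~\ref{th:charac}, stated for the compact metric $\dist$ (the theorem you were asked to prove is only cited from \cite{CF03,CFM07}; the paper never proves it for $\oldd$). Both proofs localize at the slice of configurations whose central pile is $0$ (the paper's $P_0$, your $K$), bound the central variation there (the paper's Lemma~\ref{lem:l}, your constant $c$), convert uniform continuity into a single finite radius, read the local rule off representatives in the slice, and propagate it by shift- and vertical-commutation. The one ingredient your setting genuinely requires beyond the paper's is the compactness of $K$ itself: under $\dist$ the whole space $\C$ is compact, whereas under $\oldd$ it is only locally compact, and your observation that pinning $x_0=0$ removes the only escape to infinity (the central value), proved by diagonal extraction, is correct and is exactly what makes the Hedlund-type argument go through in this setting.

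The forward direction, however, contains a genuine error. The claimed modulus --- that $\oldc^0_{s'}(x)$ with $s'\ge s+2r$ determines $\oldc^0_s(F(x))$ --- is false, because its first branch is false: $|x_k-x_0|\le s'$ does \emph{not} imply that $\oldc^0_{s'}(x)$ determines $R^k_r(x)$. If $x_k-x_0$ lies within $r$ of the cutoff $s'$, a neighbor $x_j$ with $x_j-x_0>s'$ is recorded as $+\infty$ in the cylinder, yet $x_j-x_k$ may or may not exceed $r$, and the range sees this difference. Concretely, take $d=1$, $r=1$, $s=1$, $s'=3$, and configurations $x,y$ with $x_0=y_0=0$, $x_1=y_1=3$, $x_2=4$, $y_2=5$, all remaining piles equal to $0$: then $\oldc^0_3(x)=\oldc^0_3(y)$, but $R^1_1(x)=(-\infty,1)$ while $R^1_1(y)=(-\infty,+\infty)$, so a local rule with $f(-\infty,1)=-1$, $f(-\infty,+\infty)=0$ and $f(0,+\infty)=1$ gives $F(x)_0=F(y)_0=1$, $F(x)_1=2$ but $F(y)_1=3$, whence $\beta_1^1(F(x)_1)=1\neq+\infty=\beta_1^1(F(y)_1)$ and $\oldc^0_1(F(x))\neq\oldc^0_1(F(y))$. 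The statement you are proving is of course still true and the repair is routine --- split the cases at $|x_k-x_0|\le s'-r$ rather than at $s'$, and take $s'\ge s+3r+1$ so that the excluded band still forces $|F(x)_k-F(x)_0|>s$ --- but the step fails as written. You also silently ignore the case $x_0=\pm\infty$, where the top cylinder is referenced to height $0$ rather than to $x_0$; infinity-preservation makes that case easy, but it needs to be said.
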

\section{Topology and dynamics}\label{sec:topo}

In this section we introduce a compact topology on the SA
configuration space by means of a relation between SA and CA. With
this topology, a Hedlund-like theorem still holds and each SA turns
out to be homeomorphic to a CA acting on a specific subshift. We also
characterize CA whose action on this subshift represents a SA.
\NEW{Finally, we prove that equicontinuity is equivalent to ultimate
periodicity, and that expansivity is a very strong notion: there exist
no positively expansive SA.}

\subsection{A compact topology for SA configurations}

From \cite{CFM07}, we know that any SA of dimension $d$ can be
simulated by a suitable CA of dimension $d+1$ (and also any CA can
be simulated by a SA). In particular, a $d$-dimensional SA
configuration can be seen as a ($d+1$)-dimensional CA
configuration on the alphabet $A=\set{0,1}$. More precisely,
consider the function $\zeta: \C\to\set{0,1}^{\Z^{d+1}}$ defined
as follows
\[
\forall x\in\C, \quad \forall i\in\Z^d, \forall k\in\Z,\quad
\zeta(x)_{(i,k)}=
    \left\{\begin{array}{l@{\hspace{5mm}}l}
      1 & \text{if $x_i\ge k$}\enspace,\\
      0 & \text{otherwise.}
    \end{array}\right.
\]
A SA configuration $x\in\C$ is coded by the CA configuration
$\zeta(x)\in \set{0,1}^{\Z^{d+1}}$. Remark that $\zeta$ is an
injective function. \smallskip

\NEW{Consider the
  $(d+1)$-dimensional matrix $K\in\Mdd_(1,\ldots,1,2)$ such that
  $K_{1,\ldots,1,2}=1$ and $K_{1,\ldots,1,1}=0$.} With a little abuse
of notation, denote $S_K=S_{\{K\}}$ the subshift of configurations
that do not contain the pattern $K$.
\begin{proposition}\label{prop:sub}
The set $\zeta(\C)$ is the subshift $S_K$.
\end{proposition}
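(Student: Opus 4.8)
The plan is to establish the set equality $\zeta(\C)=S_K$ by proving the two inclusions separately; the whole argument rests on the observation that forbidding the pattern $K$ is exactly the requirement that each vertical column of a $\set{0,1}$-configuration be ``downward closed''. For the inclusion $\zeta(\C)\subseteq S_K$, I would fix $x\in\C$ and argue by contradiction: if $\zeta(x)$ contained an occurrence of $K$ at reference position $(i,k)\in\Z^{d+1}$, then by the definition of the finite portion of order $(1,\ldots,1,2)$ this would force $\zeta(x)_{(i,k)}=0$ and $\zeta(x)_{(i,k+1)}=1$; but $\zeta(x)_{(i,k+1)}=1$ means $x_i\ge k+1$, hence $x_i\ge k$, whence $\zeta(x)_{(i,k)}=1$, a contradiction. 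Thus no occurrence of $K$ exists and $\zeta(x)\in S_K$.

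For the converse $S_K\subseteq\zeta(\C)$, I would fix $c\in S_K$ and, for each $i\in\Z^d$, introduce the set $D_i=\set{k\in\Z: c_{(i,k)}=1}$. Avoiding $K$ is precisely the statement that $c_{(i,k+1)}=1$ implies $c_{(i,k)}=1$ for every $k$, that is, each $D_i$ is downward closed in $\Z$. The key elementary observation is that the only downward-closed subsets of $\Z$ are $\emptyset$, $\Z$ itself, and the half-lines $(-\infty,m]$ for $m\in\Z$: a nonempty down-set unbounded above must be all of $\Z$, and a nonempty down-set bounded above attains its supremum $m\in\Z$ and equals $(-\infty,m]$. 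I would then define $x\in\C$ by $x_i=-\infty$ if $D_i=\emptyset$, $x_i=+\infty$ if $D_i=\Z$, and $x_i=\max D_i$ otherwise, and check directly that $\zeta(x)_{(i,k)}=1\iff x_i\ge k\iff k\in D_i\iff c_{(i,k)}=1$ in each of the three cases, so that $c=\zeta(x)\in\zeta(\C)$.

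Combining the inclusions yields $\zeta(\C)=S_K$, and since $S_K$ is a subshift by construction this simultaneously proves that $\zeta(\C)$ is a subshift. I do not expect any genuine difficulty here: the only steps requiring care are the bookkeeping with the reference position of $K$, to be sure that a $0$ sits directly below a $1$ in the vertical coordinate, and the correct treatment of the two infinite piles $x_i=\pm\infty$, which correspond exactly to the two degenerate down-sets $\Z$ and $\emptyset$.
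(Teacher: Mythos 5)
Your proof is correct and follows essentially the same route as the paper: both directions are proved by inclusion, and your preimage construction (taking $\max D_i$, with $\pm\infty$ for the degenerate down-sets) is exactly the paper's definition $x_i=\sup\set{k : y_{(i,k)}=1}$ with the conventions $\sup\emptyset=-\infty$ and $\sup\Z=+\infty$ made explicit. You simply spell out the details (the contradiction for the first inclusion, the classification of downward-closed subsets of $\Z$ for the second) that the paper's very terse proof leaves implicit.
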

\begin{proof}
Each $d$-dimensional SA configuration $x\in \C$ is coded by the
$(d+1)$-dimensional CA configuration $\zeta(x)$ such that for any
$i,h\in\Z^{d+1}, M^i_h(\zeta(x))\neq K$, then $\zeta(\C)\subseteq
S_K$.
Conversely, we can define a preimage by $\zeta$ for any $y\in S_K$, by $\forall i\in\Z^d,x_i=\sup\{k:y_{(i,k)}=1\}$. Hence $\zeta(\C)=S_K$.
\end{proof}
Figure~\ref{fig:C-subshift} illustrates the mapping $\zeta$ and
the matrix $K=\left ( \begin{array}{c} 1 \\
0\end{array}\right)$ for the dimension $d=1$. The set of SA
configurations $\C = \Zt^{\Z}$ can be seen as the subshift
$S_K=\zeta(\C)$ of the CA configurations set $\set{0, 1}^{\Z^{2}}$.

\begin{figure}[!ht]
  \begin{center}
  \subfigure[Valid configuration.]{
    \includegraphics{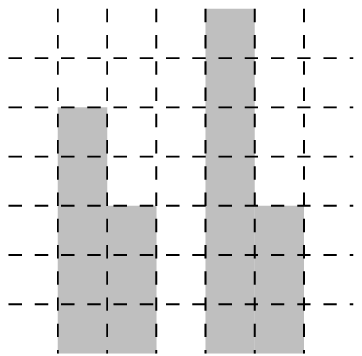}
    \label{fig:C-subshift-a}
  }\hspace{0cm}
  \subfigure[Invalid configuration.]{
    \includegraphics{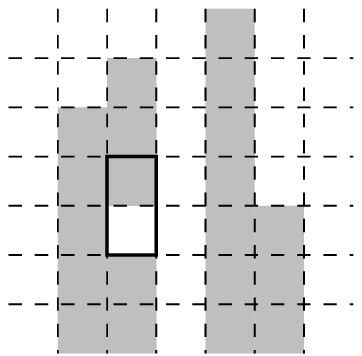}
    \label{fig:C-subshift-b}
  }
  \caption{The configuration from
Figure~\ref{fig:C-subshift-a} is valid, while the configuration
from Figure~\ref{fig:C-subshift-b} contains the forbidden matrix
$K$: there is a ``hole''.} \label{fig:C-subshift}
  \end{center}
\end{figure}

\begin{definition}\label{newdist}
The distance $\dist:\C\times\C\to\re_+$ is defined as follows:
\[
\forall x,y\in\C, \quad
\dist(x,y)=\dist_T(\zeta(x),\zeta(y))\enspace.
\]
\end{definition}
In other words, the (well defined) distance $\dist$ between two
configurations $x,y\in C$ is nothing but the Tychonoff distance
between the configurations $\zeta(x), \zeta(y)$ in the subshift
$S_K$.
The corresponding metric topology is the $\{0,1\}^{\Z^{d+1}}$ product
topology induced on $S_K$.

\begin{remark}\label{rem-proj}
  Note that this topology does not coincide with the topology obtained
  as countable product of the discrete topology on $\Zt$.  Indeed, for
  any $i\in\Z^d$, the $i^{\text{th}}$ projection $\pi_i:\C\to\Zt$
  defined by $\pi_i(x)=x_i$ is not continuous in any configuration $x$
  with $x_i=\pm\infty$. However, it is continuous in all configurations
  $x$ such that $x_i\in\Z$, since $\forall k\in\Z,\forall x,y \in \C$,
  conditions $\pi_i(x)=k$ and $\dist(x,y)\le2^{-\max(|i|,k)}$ imply that
  $\pi_i(y)=k$.
\end{remark}

By definition of this topology, if one considers $\zeta$ as a map from
$\C$ onto $S_K$, $\zeta$ turns out to be an isometric
homeomorphism between the metric spaces $\C$ (endowed with $\dist$) and
$S_K$ (endowed with $\dist_T$). As an immediate consequence, the following
results hold.
\begin{proposition}\label{prop:C-comp-etc}
The set \C
is a compact and totally disconnected space where the open
balls are clopen (\ie closed and open) sets.
\end{proposition}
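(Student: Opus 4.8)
The statement to prove, Proposition~\ref{prop:C-comp-etc}, asserts that $\C$ endowed with the metric $\dist$ is compact, totally disconnected, and has a basis of clopen balls. The plan is to transport all three properties directly through the isometric homeomorphism $\zeta\colon\C\to S_K$ established immediately before the statement. Since $\zeta$ is an isometry onto $S_K$ (with the induced Tychonoff metric), every metric and topological property is preserved, so it suffices to verify that $S_K$ enjoys each property as a subspace of the Cantor space $\{0,1\}^{\Z^{d+1}}$.

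\textbf{Compactness.} First I would recall from the earlier discussion that $\{0,1\}^{\Z^{d+1}}$ is a Cantor space, hence compact. By Proposition~\ref{prop:sub}, $S_K=\zeta(\C)$ is a subshift, and subshifts are by definition closed subsets of the full shift. A closed subset of a compact space is compact, so $S_K$ is compact, and therefore so is $\C$ via $\zeta^{-1}$.

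\textbf{Total disconnectedness.} The full shift $\{0,1\}^{\Z^{d+1}}$ is totally disconnected (this was stated in the excerpt as part of it being a Cantor space), and total disconnectedness is inherited by arbitrary subspaces: any subspace of a totally disconnected space is totally disconnected, since connected components can only shrink when passing to a subspace. Thus $S_K$ is totally disconnected, and hence so is $\C$.

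\textbf{Clopen balls.} For the final clause, I would argue that in $\{0,1\}^{\Z^{d+1}}$ the open ball of radius $2^{-k}$ around a point consists of all configurations agreeing with the center on the finite window $\inter{-k}{k}^{d+1}$; this cylinder set is a finite intersection of preimages of points under the (continuous) coordinate projections, hence clopen. Intersecting with $S_K$ keeps it clopen in the subspace topology, so balls in $S_K$ are clopen, and pulling back through the isometry $\zeta$ shows the same for the balls of $(\C,\dist)$. The only point requiring a little care is that the metric $\dist_T$ is an ultrametric, so that open balls coincide with closed balls of the corresponding radius; this ultrametric structure is what makes the balls genuinely clopen rather than merely open. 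I do not expect any real obstacle here: once the isometry with $S_K$ is in hand, each property is a standard consequence of $S_K$ being a closed (hence compact) subset of a Cantor space, and the whole argument is essentially bookkeeping with the product topology.
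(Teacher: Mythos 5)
Your proposal is correct and takes essentially the same approach as the paper: the paper offers no separate proof, stating the proposition as an immediate consequence of $\zeta$ being an isometric homeomorphism from $(\C,\dist)$ onto the subshift $S_K\subseteq\set{0,1}^{\Z^{d+1}}$, which is precisely the transport argument you carry out. Your write-up merely makes explicit the standard facts the paper leaves implicit (closedness of the subshift in the compact product space, heredity of total disconnectedness under subspaces, and clopenness of cylinder sets).
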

\begin{proposition}\label{prop:C-perfect}
  The space \C is perfect.
\end{proposition}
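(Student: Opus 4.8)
The plan is to show that \C has no isolated points, that is, for every configuration $x\in\C$ and every $r\in\n_+$ there is a configuration $y\neq x$ with $\dist(x,y)\le 2^{-r}$. Since $\zeta$ is injective and $\dist(x,y)=\dist_T(\zeta(x),\zeta(y))$, any such $y$ distinct from $x$ automatically satisfies $\dist(x,y)>0$; so producing one $y$ for each $r$ exhibits $x$ as a limit of pairwise distinct configurations and proves that \C is perfect.

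The construction I would use is a one-site perturbation located far from the origin. Fix $x\in\C$ and $r\in\n_+$, and set $i_0=(r,0,\ldots,0)\in\Z^d$, so that $|i_0|=r$. Define $y\in\C$ by $y_i=x_i$ for all $i\neq i_0$, and choose $y_{i_0}\in\Zt$ to be any value different from $x_{i_0}$ (possible since $\Zt$ has more than one element). Then $y\neq x$, hence $\zeta(y)\neq\zeta(x)$ and $\dist(x,y)>0$.

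The key point is that the coding $\zeta$ is columnwise: by definition $\zeta(x)_{(i,k)}$ depends only on the single entry $x_i$. Consequently $\zeta(x)$ and $\zeta(y)$ can differ only at coordinates of the form $(i_0,k)$ with $k\in\Z$, and each such coordinate has infinite norm $\max(|i_0|,|k|)\ge|i_0|=r$. Therefore the smallest norm of a coordinate at which $\zeta(x)$ and $\zeta(y)$ disagree is at least $r$, which gives $\dist(x,y)=\dist_T(\zeta(x),\zeta(y))\le 2^{-r}$, as required. Letting $r$ grow shows that every $x$ is a limit of points distinct from it.

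I expect the only point needing care to be the observation that a single-site change in \C perturbs the CA coding in one vertical column only, so that its effect is confined to coordinates of large norm; this is immediate from the pointwise formula for $\zeta$ and is precisely what makes the far-away perturbation harmless for the distance. Working directly in \C, rather than in $S_K$, also avoids any validity issue: every function from $\Z^d$ to $\Zt$ is a legitimate configuration, so $y$ automatically lies in \C and $\zeta(y)$ automatically lies in $S_K$, whereas an arbitrary modification of a column of a point of $S_K$ could create the forbidden pattern $K$.
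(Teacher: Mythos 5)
Your proof is correct and takes essentially the same approach as the paper's: a single-site perturbation at a site of norm $r$, whose effect under the coding $\zeta$ is confined to one vertical column and hence to coordinates of norm at least $r$. The only (harmless) difference is that the paper chooses the new value in $\set{0,1}$ so as to get the exact equality $\dist(x,y)=2^{-r}$, whereas you allow an arbitrary different value and settle for the bound $\dist(x,y)\le 2^{-r}$, which suffices.
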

\begin{proof}
   Choose an arbitrary configuration $x\in \C$.
   For any $n\in\N$, let $l \in \Z^d$ such that $|l| = n$.
   We build a configuration $y\in \C$,
   equal to $x$ except at site $l$, defined as follows
   \[
   \forall j\in\Z^d\setminus\set{l},\;y_j=x_j
   \quad \text{and} \quad
   y_l = \left\{
     \begin{array}{c@{\hspace{5mm}}l}
       1 & \text{if $x_l = 0$}\enspace,\\
       0 & \text{otherwise.}
     \end{array}\right.
   \]
  By Definition \ref{newdist}, $\dist(y,x) = 2^{-n}$.
\end{proof}
Consider now the following notion.
\begin{definition}[ground cylinder]
For any configuration $x\in \C$, for any $r\in\n$, and for any
$i\in\z^d$, the \emph{ground cylinder} of $x$ centered on $i$ and
of radius $r$ is the $d$-dimensional matrix $C^i_r(x) \in \Mrd$ defined by
\[
  \forall k \in \inter{1}{2r+1}^d, \quad \left(C^i_r(x)\right)_k =
  \obrx{i+k-r-1} \enspace.
\]
\end{definition}
For example in dimension $1$,
\[
  C^i_r(x) = \parto{\obrx{i-r}, \ldots, \obrx{i}, \ldots, \obrx{i+r}} \enspace.
\]

Figure~\ref{fig:cylinders} illustrates top cylinders and ground
cylinders in dimension~$1$. Remark that the content of the two
kinds of cylinders is totally different.
\begin{figure}[!ht]
  \begin{center}
  \subfigure[Top cylinder centered on $x_i=4$:\newline $\oldc^i_r(x) = (+1, -\infty, -3, \mathbf{4}, -2, -2, +1)$.]{
    \includegraphics[scale=0.7]{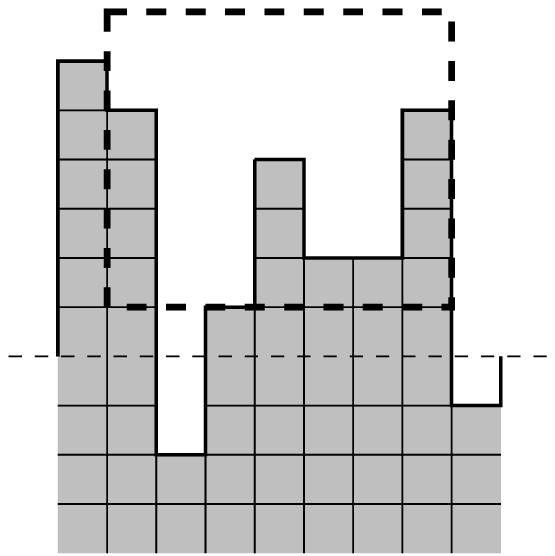}
    \label{fig:top-cylinder}
  }\hspace{5mm}
  \subfigure[Ground cylinder, at height~$0$:\newline $C^i_r(x) = (+\infty, -2, +1, \mathbf{+\infty}, +2, +2, +\infty)$.]{
    \includegraphics[scale=0.7]{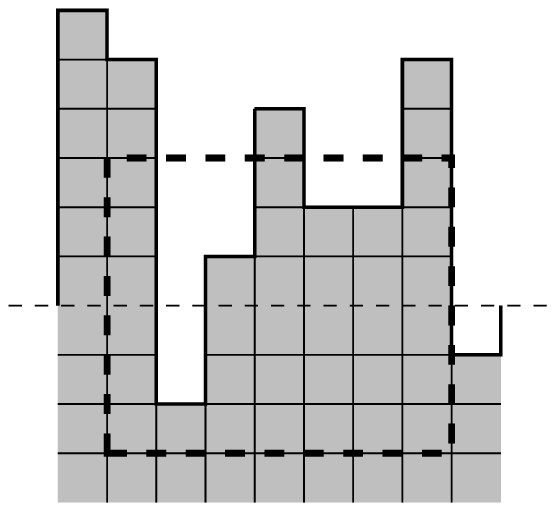}
    \label{fig:ground-cylinder}
  }
  \caption{Illustration of the two notions of cylinders on the same configuration, with radius~$3$, in dimension~$1$.}
  \label{fig:cylinders}
  \end{center}
\end{figure}

From Definition \ref{newdist}, we obtain the following expression of distance $\dist$ by means of ground cylinders.
\begin{remark}
For any pair of configurations $x,y\in\C$, we have
\[
\dist(x,y)=2^{-k} \quad \text{where}\quad k=\min\set{r\in\n:
C^0_r(x)\neq C^0_r(y)} \enspace.
\]
\end{remark}
As a consequence, two configurations $x,y$ are compared by putting
boxes (the ground cylinders) at height~$0$ around the
corresponding piles indexed by $0$. The integer $k$ is the size of
the smallest cylinders in which a difference appears between $x$
and $y$. This way of calculating the distance $\dist$ is similar to
the one used for the distance $\dist^\prime$, with the difference
that the measuring devices and the cylinders are now located at
height $0$. This is slightly less intuitive than the distance
$\dist^\prime$, since it does not correspond to the definition of
the local rule. However, this fact is not an issue all the more
since the configuration space is compact and the representation
theorem still holds with the new topology
(Theorem~\ref{th:charac}).

\subsection{SA as CA on a subshift}
Let $(X,m_1)$ and $(Y,m_2)$ be two metric spaces. Two functions
$H_1:X\to X$, $H_2:Y\to Y$ are (topologically) \emph{conjugated} if
there exists a homeomorphism $\eta:X\to Y$ such that
$H_2\circ\eta=\eta\circ H_1$.

We are going to show that any SA is conjugated to some restriction of
a CA. Let $F$ a $d$-dimensional SA of radius $r$ and local rule $f$.
Let us define the ($d+1$)-dimensional CA $G$ on the alphabet
$\{0,1\}$, with radius $2r$ and local rule $g$ defined as follows (see
\cite{CFM07} for more details). \NEWa{Let $M \in
  \mathcal{M}^{d+1}_{\mathbf{4r+1}}$ be a matrix on the finite alphabet
  $\set{0,1}$ which does not contain the pattern $K$. If there is a
  $j\in[r+1,3r]$ such that $M_{(2r+1,\ldots,2r+1,j)}=1$ and
  $M_{(2r+1,\ldots,2r+1,j+1)}=0$, then let $R \in \Rrd$ be the range taken from
  $M$ of radius $r$ centered on $(2r+1,\ldots,2r+1,j)$. See
  figure~\ref{fig:SA-CA} for an illustration of this construction in
  dimension~$d=1$.}
\begin{figure}[!ht]
  \begin{center}
    \includegraphics[scale=0.7]{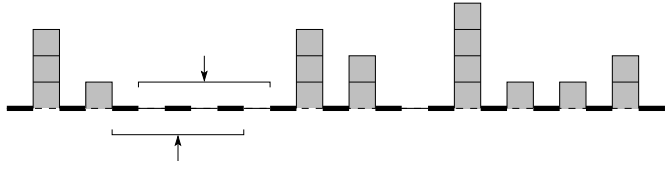}
    \caption{\NEWa{Construction of the local rule $g$ of the CA from the
      local rule $f$ of the SA, in dimension~$1$. A range $R$ of radius $r$ is
      associated to the matrix $M$ of order $\mathbf{4r+1}$.}}
    \label{fig:SA-CA}
  \end{center}
\end{figure}

\NEWa{The new central value depends on the height $j$ of the central column
plus its variation. Therefore, define $g(M)=1$ if $j+f(R)\geq 0$,
$g(M) = 0$ if $j+f(R)<0$, or $g(M) = M_{(2r+1, \ldots, 2r+1)}$
(central value unchanged) if there is no such $j$.}

\smallskip

The following diagram commutes:
\begin{equation}\label{cd}
\begin{CD}
  \C  @>F>> & \C\\
   @V{\zeta}VV&@VV{\zeta}V\\
  S_K @>>G> & S_K
\end{CD}\enspace,
\end{equation}
\ie $G\circ\zeta=\zeta\circ F$. As an immediate consequence, we
have the following result.
\begin{proposition}\label{prop:conju}
Any $d$-dimensional SA $F$ is topologically conjugated to a
suitable $(d+1)$-dimensional CA $G$ acting on $S_K$.
\end{proposition}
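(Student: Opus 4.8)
The plan is to take $\zeta$ itself as the conjugating homeomorphism. We already know, from the discussion preceding Proposition~\ref{prop:C-comp-etc}, that viewed as a map onto its image $\zeta$ is an isometric homeomorphism between $\C$ (endowed with $\dist$) and $S_K$ (endowed with $\dist_T$), and that $S_K=\zeta(\C)$ by Proposition~\ref{prop:sub}. Hence, by the very definition of topological conjugacy, the whole statement reduces to verifying the single identity $G\circ\zeta=\zeta\circ F$ displayed in~(\ref{cd}), together with the observation that $S_K$ is $G$-invariant so that $G|_{S_K}\colon S_K\to S_K$ is well defined. The invariance is automatic once the identity holds: for $y=\zeta(x)$ one gets $G(y)=\zeta(F(x))\in\zeta(\C)=S_K$. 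Thus all the content sits in the commutation of~(\ref{cd}).

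First I would reduce the commutation to a cell-by-cell check. Since $G(\zeta(x))$ and $\zeta(F(x))$ both live in $\set{0,1}^{\Z^{d+1}}$, it suffices to fix $x\in\C$ and a cell $(i,k)\in\Z^{d+1}$ and prove $G(\zeta(x))_{(i,k)}=\zeta(F(x))_{(i,k)}$, that is, $g(M)=1$ exactly when $F(x)_i\ge k$, where $M\in\mathcal{M}^{d+1}_{\mathbf{4r+1}}$ is the radius-$2r$ neighbourhood of $(i,k)$ read in $\zeta(x)$. By construction $M$ contains no occurrence of $K$, so the local rule $g$ applies to it.

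Then I would run the case analysis dictated by the definition of $g$. If $x_i=+\infty$ (resp.\ $x_i=-\infty$) the central column of $M$ is constantly $1$ (resp.\ $0$), no index with a $1$-to-$0$ transition exists, so $g(M)$ returns the unchanged central value $\zeta(x)_{(i,k)}$; this matches $\zeta(F(x))_{(i,k)}$ because $F$ is infinity-preserving. If $x_i$ is finite, a transition index $j\in\inter{r+1}{3r}$ is visible precisely when $(i,k)$ lies in the vertical band around the top of column $i$ in which the pile value can still change; there I would show that the range $R\in\Rrd$ extracted from $M$ coincides with the genuine SA range $R^i_r(x)$, since both record, for every horizontal neighbour $i'$ within radius $r$, the clamped relative height $\beta^{x_i}_r(x_{i'})$, and then check that the threshold test defining $g$ reproduces $x_i+f(R)\ge k$, i.e.\ $F(x)_i\ge k$. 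Outside this band no transition appears, $g$ keeps the central value, and this already agrees with $\zeta(F(x))_{(i,k)}$ because $|f(R)|\le r$ forces the sign of $F(x)_i-k$ to equal that of $x_i-k$.

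The main obstacle is exactly the identification $R=R^i_r(x)$ together with the vertical bookkeeping. One must confirm that reading the positions of the column-tops in the $(d+1)$-dimensional block $M$ relative to the top of the central column, and clamping them to $\intinf{-r}{r}$, reproduces the measuring-device values that define the range; and one must keep careful track of the offset between the matrix index $j$ and the true height $x_i$, so that the condition $j+f(R)\ge 0$ (where $j$ is understood as the height of the central top relative to the updated cell) translates faithfully into $F(x)_i\ge k$. This alignment of indices is the delicate point where a miscounted shift would break the correspondence. Once it and the boundary cases are settled, diagram~(\ref{cd}) commutes, $\zeta$ is the sought conjugacy, and the proposition follows immediately.
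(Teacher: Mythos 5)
Your proposal is correct and takes essentially the same route as the paper: the paper also uses $\zeta$ (already established as an isometric homeomorphism from $\C$ onto $S_K$) as the conjugating map and obtains the proposition as an immediate consequence of the commuting diagram $G\circ\zeta=\zeta\circ F$, whose detailed verification it delegates to the earlier construction in the literature rather than writing it out. Your cell-by-cell sketch of that commutation---the infinity-preserving case, the identification of the extracted range with $R^i_r(x)$ in the visible band, and the $|f(R)|\le r$ argument outside it---is precisely the bookkeeping the paper leaves implicit.
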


\NEWa{Being a dynamical submodel, SA share properties with
CA, some of which are proved below. However, many results which
are true for CA are no longer true for SA; for instance,
injectivity and bijectivity are not equivalent, as proved in
\cite{cervelle05}. Thus, SA deserve to be considered as a new
model.}

\begin{corollary}\label{cor:unifcont}
The global rule $F:\C\to\C$ of a SA is uniformly continuous w.r.t distance $\dist$.
\end{corollary}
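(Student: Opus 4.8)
The plan is to exploit the isometric conjugacy already established between $F$ and the cellular automaton $G$ acting on $S_K$. Recall from the commuting diagram~\eqref{cd} that $G\circ\zeta=\zeta\circ F$, and that $\zeta$ is an isometric homeomorphism from $(\C,\dist)$ onto $(S_K,\dist_T)$; in particular its inverse $\zeta^{-1}:S_K\to\C$ is also an isometry. Hence we may write $F=\zeta^{-1}\circ\left(G|_{S_K}\right)\circ\zeta$, and reduce the uniform continuity of $F$ to that of $G|_{S_K}$.

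First I would observe that $G$, being the global rule of a cellular automaton, is uniformly continuous on the whole space $\{0,1\}^{\Z^{d+1}}$ with respect to $\dist_T$; this is precisely the easy half of Hedlund's theorem recalled in the excerpt. Next, the commuting diagram together with Proposition~\ref{prop:sub} gives $G(S_K)=G(\zeta(\C))=\zeta(F(\C))\subseteq\zeta(\C)=S_K$, so $S_K$ is $G$-invariant and the restriction $G|_{S_K}:S_K\to S_K$ is well defined. Being the restriction of a uniformly continuous map, $G|_{S_K}$ is itself uniformly continuous.

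Finally, I would conclude by composition: $\zeta$ and $\zeta^{-1}$ are isometries, hence trivially uniformly continuous, and a composition of uniformly continuous maps is uniformly continuous. Therefore $F=\zeta^{-1}\circ\left(G|_{S_K}\right)\circ\zeta$ is uniformly continuous with respect to $\dist$, as claimed.

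I do not expect any genuine obstacle here; the single point requiring care is that the topological conjugacy of Proposition~\ref{prop:conju} by itself only yields ordinary continuity of $F$, so the uniform version must be extracted either from the \emph{isometry} property of $\zeta$ (as above) or, alternatively and more in the spirit of the new topology, from compactness: by Proposition~\ref{prop:C-comp-etc} the space $\C$ is compact, and a continuous map on a compact metric space is automatically uniformly continuous by the Heine--Cantor theorem. Either route works, whereas in the old, merely locally compact, topology induced by $\oldd$ this upgrade from continuity to uniform continuity is exactly what could fail.
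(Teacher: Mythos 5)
Your proof is correct, and its primary route differs from the paper's in one meaningful respect. Both arguments start from the same decomposition $F=\zeta^{-1}\circ G\circ\zeta$ given by the commuting diagram~\eqref{cd}. The paper then argues: $G$ is continuous, so $F$ is continuous, and since $\C$ is compact (Proposition~\ref{prop:C-comp-etc}), $F$ is uniformly continuous by Heine--Cantor. You instead obtain uniformity directly, without any appeal to compactness: the CA global rule $G$ is \emph{uniformly} continuous on all of $\{0,1\}^{\Z^{d+1}}$ (the easy half of Hedlund's theorem, recalled in Section~\ref{sec:def}), its restriction to the $G$-invariant subshift $S_K$ inherits uniform continuity, and $\zeta$, $\zeta^{-1}$ are isometries by Definition~\ref{newdist}, so the composition is uniformly continuous. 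Your verification that $S_K$ is $G$-invariant via $G(\zeta(\C))=\zeta(F(\C))\subseteq\zeta(\C)$ is a small step the paper leaves implicit. What each approach buys: the paper's proof is shorter given that compactness has already been established; yours is more self-contained and effectively quantitative (the modulus of continuity of $G$, coming from its radius, transfers verbatim to $F$ through the isometry), and it isolates exactly why the upgrade to uniformity is automatic here but would not be under the old, merely locally compact, metric $\oldd$. You also note the compactness route as an alternative, which is precisely the paper's argument, so nothing is missing.
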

\begin{proof}
Let $G$ be the global rule of the CA which simulates the given SA.
Since the diagram~\eqref{cd} commutes and $\zeta$ is a
homeomorphism, $F=\zeta^{-1}\circ G\circ \zeta$. Since $G$ is a
continuous map and, by Proposition~\ref{prop:C-comp-etc}, $\C$ is
compact, then the thesis is obtained.
\end{proof}
\ignore{
\begin{proof}
We prove that for any $l\in\n$, the integer $m=l+2r$ is such that:
\[
\forall x,y\in\C \quad \dist'(x,y)<2^{-m} \Rightarrow
\dist'\parto{F(x),F(y)}<2^{-l} \enspace.
\]
Let $x,y\in\C$ be two arbitrary configurations with distance less
than $2^{-m}$ (equivalently, $\obmx{i}=\obmy{i}$ for $i \in
\inter{-m}{m}^d$). Let $j\in\inter{-l}{l}^d$, we distinguish the
following cases:

\begin{enumerate}[$(1)$]
\item $|x_j|>l+r$: for the sake of simplicity, let us assume
$x_j>l+r$ (the situation $x_j<-l-r$ is similar). We have
$l+r < \obmx{j}=\obmy{j} \leq +\infty$ and then also $y_j>l+r$. In this way
$F(x)_j=x_j+f(R^j_r(x))\geq x_j-r>l+r-r=l$ and also $F(y)_j>l$. Thus
we obtain $\oblFx{j}=+\infty=\oblFy{j}$.

\item $|x_j| \leq l+r$: for the sake of simplicity, let us assume $0
  \leq x_j\leq l+r$ (the situation $0\geq x_j\geq -l-r$ is similar).
  We have $x_j=\obmx{j}=\obmy{j}=y_j$. The values of $F(x)_j$ and
  $F(y)_j$ depend on all the values of $x_k$ and $y_k$ with
  $k\in\inter{j-r}{j+r}^d$.  Let us remark that $\obmx{k}=\obmy{k}$. We
  want to show that $\obrj{k}=\obrjy{k}$, ensuring that
  $F(x)_j=F(y)_j$ and so $\oblFx{j}=\oblFy{j}$. Let us consider the
  following two situations.

\begin{enumerate}[$(a)$]
\item $\obmx{k}=x_k$: we have that $x_k=y_k$, and since $x_k-x_j = y_k-y_j$ we obtain
$\obrj{k}=\obrjy{k}$.

\item $|\obmx{k}|=\infty$: for the sake of simplicity, let us assume
  $\obmx{k}=+\infty$ (the situation $\obmx{k}=+\infty$ is similar). We
  have $x_k>m$ and $y_k>m$ and then $x_k-x_j > m-x_j \geq l+2r-l-r =
  r$. Thus we obtain $\obrj{k} = +\infty$ and, analogously,
  $\obrjy{k}=+\infty$.
\end{enumerate}
\end{enumerate}
To conclude, for each $j\in\inter{-l}{l}^d$ we are sure that
$\oblFx{j}=\oblFy{j}$ and so $C^0_l(F(y))=C^0_l(F(x))$, in other words
 $d\parto{F(x),F(y)}<2^{-l}$.
\end{proof}
}
For every $a\in\Z$, let $P_a=\pi_0^{-1}(\{a\})$ be the clopen (and compact) set of all configurations $x\in\C$ such that $x_{\nv}=a$.
\begin{lemma}\label{lem:l}
  Let $F:\C\to\C$ be a continuous and infinity-preserving
  map. There exists an integer $l\in\n$ such that for any
  configuration $x\in P_0$ we have $|F(x)_0|\leq l$.
\end{lemma}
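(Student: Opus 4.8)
The plan is to exhibit the map $x\mapsto F(x)_0$ as a continuous function from the compact set $P_0$ into $\Z$ equipped with the discrete topology, and then to conclude by the elementary fact that a continuous image of a compact set in a discrete space is finite, hence bounded. First I would record that the map is well defined with values in $\Z$: for every $x\in P_0$ we have $x_0=0\neq\pm\infty$, so the infinity-preserving hypothesis forces $F(x)_0\neq\pm\infty$, that is $F(x)_0\in\Z$. Thus $x\mapsto\pi_0(F(x))$ is a genuine map $P_0\to\Z$.

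The heart of the argument is to show that this map is locally constant on $P_0$, i.e. continuous into the discrete space $\Z$. I would fix $x\in P_0$, set $k=F(x)_0\in\Z$, and use Remark~\ref{rem-proj}: since $F(x)_0\in\Z$, the projection $\pi_0$ is continuous at the point $F(x)$, so there is $\varepsilon>0$ such that $\dist(F(x),z)\le\varepsilon$ implies $\pi_0(z)=k$. By continuity of $F$ at $x$ one obtains $\delta>0$ with $\dist(x,x')<\delta\Rightarrow\dist(F(x),F(x'))\le\varepsilon$, and therefore $F(x')_0=k=F(x)_0$ for all $x'$ in the $\delta$-ball around $x$. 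Hence $x\mapsto F(x)_0$ is constant on a neighborhood of each point of $P_0$.

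Finally, since $P_0=\pi_0^{-1}(\{0\})$ is compact (it is one of the clopen compact sets $P_a$, by Proposition~\ref{prop:C-comp-etc}) and $x\mapsto F(x)_0$ is continuous into discrete $\Z$, its image is a compact subset of $\Z$, hence finite. Taking $l=\max\set{|F(x)_0|:x\in P_0}$, which is attained and finite, yields the desired uniform bound. The only delicate point, and the reason the infinity-preserving assumption cannot be dropped, is that $\pi_0$ is \emph{not} continuous at configurations whose central pile is $\pm\infty$ (as noted in Remark~\ref{rem-proj}); the hard part is thus precisely to guarantee that $F(x)_0$ stays finite so that Remark~\ref{rem-proj} applies, and this is exactly what infinity-preservation provides on $P_0$.
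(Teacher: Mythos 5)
Your proof is correct and follows essentially the same route as the paper's: both use infinity-preservation to place $F(P_0)$ inside $\pi_0^{-1}(\Z)$, invoke Remark~\ref{rem-proj} for the continuity of $\pi_0$ there, and conclude by compactness of $P_0$ that the image $\pi_0(F(P_0))$ is bounded. The only cosmetic difference is that you unpack the composition into an explicit local-constancy ($\varepsilon$--$\delta$) argument and conclude finiteness of the image in discrete $\Z$, whereas the paper phrases it as a compact subset of $\Zt$ containing no infinity.
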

\begin{proof}
Since $F$ is continuous and infinity-preserving, the set $F(P_0)$
is compact and included in $\pi_0^{-1}(\Z)$.
  From Remark~\ref{rem-proj}, $\pi_0$
is continuous on the set $\pi_0^{-1}(\Z)$ and in particular it is
continuous on the compact $F(P_0)$. Hence $\pi_0(F(P_0))$ is a
compact subset of $\Zt$ containing no infinity, and therefore it is
included in some interval $[-l,l]$, where $l\in\n$.
\end{proof}

\begin{theorem}\label{th:charac}
  A mapping $F:\C\to\C$ is the global transition rule of a sand
  automaton if and only if all the
  following statements hold
\begin{enumerate}[$(i)$]
 \item $F$ is (uniformly) continuous w.r.t the distance $\dist$;
 \item\label{item:comsigma} $F$ is shift-commuting;
 \item\label{item:comrho} $F$ is vertical-commuting;
 \item\label{item:inf} $F$ is infinity-preserving.
\end{enumerate}
\end{theorem}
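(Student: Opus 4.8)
The plan is to prove the two implications separately, dispatching the ``only if'' direction quickly and concentrating on the Hedlund-style reconstruction in the ``if'' direction. For necessity, continuity $(i)$ is exactly Corollary~\ref{cor:unifcont}. The remaining three properties are purely combinatorial and do not depend on the chosen topology, so they already hold for every SA (this is the content of the earlier representation theorem for $\dist^\prime$). Concretely, since the range $R^i_r(x)$ records only relative heights through the measuring devices $\beta^{x_i}_r$, it is unchanged when we shift the configuration or raise every pile by one; hence $F(x)_i=x_i+f(R^i_r(x))$ commutes with $\sigma^k$ and with $\rho$, and $F$ is infinity-preserving directly by its definition (it freezes $\pm\infty$ piles and keeps finite piles finite because $f$ takes values in $\inter{-r}{r}$). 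So $(ii)$--$(iv)$ hold.

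For sufficiency, assume $F$ satisfies $(i)$--$(iv)$; I want to manufacture a radius $r$ and a local rule $f:\Rrd\to\inter{-r}{r}$ inducing $F$. First I bound the variation: Lemma~\ref{lem:l} yields $l\in\n$ with $|F(x)_0|\le l$ for every $x\in P_0$. Using vertical-commuting, for arbitrary $x$ with $x_0=a\in\Z$ the configuration $\rho^{-a}(x)$ lies in $P_0$ and $F(\rho^{-a}x)_0=F(x)_0-a$, so $|F(x)_0-x_0|\le l$; shift-commuting propagates this to every site. Next I extract a radius from continuity: uniform continuity gives $m\in\n$ such that $C^0_m(x)=C^0_m(y)$ implies $C^0_l(F(x))=C^0_l(F(y))$, in particular $\beta^0_l(F(x)_0)=\beta^0_l(F(y)_0)$. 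Restricting to $x,y\in P_0$, where $|F(x)_0|,|F(y)_0|\le l$ so the device returns the exact value, this reads $F(x)_0=F(y)_0$. I set $r=\max(m,l)$; enlarging the window and the precision only refines the cylinders, so $F(x)_0$ still depends on $x\in P_0$ only through $C^0_r(x)$.

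The decisive observation is that on $P_0$ the reference height is $0$, so the ground cylinder and the range carry the same data: for $x\in P_0$, $R^0_r(x)$ is exactly $C^0_r(x)$ with the central entry $x_0=0$ replaced by $\bot$. Hence $F(x)_0$ is a function of $R^0_r(x)$ alone, and since every $R\in\Rrd$ is realized by some $x\in P_0$ (place finite entries at their prescribed heights and $\pm(r+1)$ for the infinite ones), this defines $f:\Rrd\to\inter{-r}{r}$ by $f(R^0_r(x))=F(x)_0$, the codomain bound holding because $r\ge l$. It remains to show that the SA $\hat F$ induced by $f$ equals $F$. For $x$ with $x_0=a\in\Z$ one checks $R^0_r(\rho^{-a}x)=R^0_r(x)$, since $\beta^0_r(x_k-a)=\beta^a_r(x_k)$, whence $\hat F(x)_0=a+f(R^0_r(x))=a+F(\rho^{-a}x)_0=a+(F(x)_0-a)=F(x)_0$ by vertical-commuting; for $x_0=\pm\infty$ both rules freeze the pile by $(iv)$. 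Finally shift-commuting of both $F$ and $\hat F$ upgrades agreement at site $0$ to every site $i$ via $F(x)_i=F(\sigma^i x)_0=\hat F(\sigma^i x)_0=\hat F(x)_i$, so $F=\hat F$ is a SA.

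The main obstacle is the middle step: uniform continuity naturally yields dependence on a ground cylinder, i.e. on heights measured from $0$, whereas the SA local rule must read the range, i.e. heights measured from the top $x_0$ of the central pile. Reconciling these two reference frames is precisely what restricting to $P_0$ and then exploiting $\rho$-commutativity accomplishes; the accompanying bookkeeping is arranging a single radius $r$ to serve simultaneously as window size, measuring precision, and bound on the return value.
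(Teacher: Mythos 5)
Your proof is correct and follows essentially the same route as the paper's: necessity via Corollary~\ref{cor:unifcont} and the definition of SA, and sufficiency by combining Lemma~\ref{lem:l} with uniform continuity to define the local rule on $P_0$ (where ground cylinder and range coincide), then extending to all sites through shift- and vertical-commutativity. Your explicit choice $r=\max(m,l)$ and the remark that every range is realized by some configuration in $P_0$ merely make precise two points the paper leaves implicit.
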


\begin{proof}
Let $F$ be the global rule of a SA. By definition of SA, $F$ is
shift-commuting, vertical-commuting and infinity-preserving.
From Corollary~\ref{cor:unifcont}, $F$ is also uniformly
continuous.

\smallskip

Conversely, let $F$ be a continuous map which is shift-commuting,
vertical-commuting, and infinity-preserving. By compactness of
the space
$\C$, $F$ is also uniformly continuous. 
Let $l\in\n$ be the integer given by Lemma~\ref{lem:l}. Since $F$ is
uniformly continuous, there exists an integer $r\in\n$ such that
\[
\forall x,y\in\C \quad C_r^0(x)=C_r^0(y) \Rightarrow
C_l^0(F(x))=C_l^0(F(y)) \enspace.
\]
We now construct the local rule $f:\Rrd\to \inter{-r}{r}$ of the
automaton. For any input range $R\in\Rrd$, set $f(R)=F(x)_0$, where
$x$ is an arbitrary configuration of $P_0$ such that \NEW{$\forall k
  \in \inter{1}{2r+1}$, $k \ne r+1$, $\obrx{k-r-1}=R_k$}. Note that
the value of $f(R)$ does not depend on the particular choice of the
configuration $x\in P_0$ such that $\forall k\neq r+1$, $\obrx{k-r-1}=R_k$.
Indeed, Lemma~\ref{lem:l} and uniform continuity together ensure that
for any other configuration $y\in P_0$ such that $\forall k\neq r+1$,
$\obry{k-r-1}=R_k$, we have $F(y)_0=F(x)_0$, since $\oblFx{0}=\oblFy{0}$
and $|F(y)_0|\leq l$. Thus the rule $f$ is well defined.

We now show that $F$ is the global mapping of the sand automaton of
radius $r$ and local rule $f$. Thanks to $(\ref{item:inf})$, it is sufficient
to prove that for any $x\in\C$ and for any $i\in\z^d$ with $|x_i|\neq \infty$,
we have $F(x)_i=x_i+f\parto{R^i_r(x)}$. By $(\ref{item:comsigma})$ and $(\ref{item:comrho})$, for any $i \in \Z^d$ such that $|x_i|\neq \infty$, it holds that
\begin{align*}
    F(x)_i &= \left[\rho^{x_i} \circ \sigma^{-i}\left(F(\sigma^i \circ \rho^{-x_i}(x))\right)\right]_i \\
               &= x_i + \left[\sigma^{-i}\left(F(\sigma^i \circ \rho^{-x_i}(x))\right)\right]_i \\
               &= x_i + \left[F(\sigma^i \circ \rho^{-x_i}(x))\right]_0 \enspace.
\end{align*}
Since $\sigma^i \circ \rho^{-x_i}(x) \in P_0$, we have by definition of $f$
\[
  F(x)_i = x_i + f\left(R_r^0(\sigma^i \circ \rho^{-x_i}(x))\right) \enspace.
\]
Moreover, by definition of the range, for all $k \in \inter{1}{2r+1}^d$,
\[
  R_r^0(\sigma^i \circ \rho^{-x_i}(x))_k =
  \beta_r^{[\sigma^i \circ \rho^{-x_i}(x)]_0}(\sigma^i \circ \rho^{-x_i}(x)_k) = \beta_r^0(x_{i+k}-x_i) = \beta_r^{x_i}(x_{i+k}) \enspace,
\]
hence $R_r^0(\sigma^i \circ \rho^{-x_i}(x)) = R_r^i(x)$, which leads to $F(x)_i = x_i + f\parto{R^i_r(x)}$.
\end{proof}

We now deal with the following question: given a
($d+1$)-dimensional CA, does it represent a $d$-dimensional SA, in
the sense of the conjugacy expressed by diagram~\ref{cd}? In order
to answer to this question we start to express the condition under
which the action of a CA $G$ can be restricted to a subshift
$S_\cF$, \ie, $G(S_\cF)\subseteq S_\cF$ (if this fact holds, the
subshift $S_\cF$ is said to be $G$-invariant).

\NEW{\begin{lemma}\label{lemm:subshift-inv}
  Let $G$ and $S_\cF$ be a CA and a subshift of finite type,
  respectively. The condition $G(S_\cF)\subseteq S_\cF$ is satisfied
  iff for any $U\in\ccL(S_\cF)$ and any $H\in\cF$ of the same order than $g(U)$, it holds that $g(U)\neq H$.
\end{lemma}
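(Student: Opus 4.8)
The plan is to prove both implications by translating the global containment $G(S_\cF)\subseteq S_\cF$ into a statement about finite portions, and the single tool needed for this is a bookkeeping identity between a portion of $G(x)$ and a slightly larger portion of $x$. Precisely, I would first establish that for every configuration $x$, every reference position $a\in\Z^d$ and every order $b\in[2r+1,+\infty)^d$, one has the \emph{bridging identity} $g\bigl(M^a_b(x)\bigr)=M^{a+\rr}_{b-2\rr}(G(x))$. This is immediate from the definition of the extension of $g$ to finite matrices, namely $g(U)=M^\rr_{b-2\rr}(G(x''))$ whenever $M^0_b(x'')=U$: I would simply take $x''=\sigma^a(x)$, so that $M^0_b(x'')=M^a_b(x)$, and then use shift-commutativity $G\circ\sigma^a=\sigma^a\circ G$ to slide the reference position from $\rr$ back to $a+\rr$. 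Everything else follows from this identity.

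For the direction $(\Leftarrow)$, I would assume the pattern condition and take an arbitrary $x\in S_\cF$, showing by contradiction that $G(x)\in S_\cF$. If $G(x)\notin S_\cF$, some finite portion $M^i_h(G(x))$ equals a forbidden pattern $H\in\cF$. Applying the bridging identity with $a=i-\rr$ and $b=h+2\rr$ gives $g(U)=M^i_h(G(x))=H$, where $U=M^{i-\rr}_{h+2\rr}(x)$. Since $U$ occurs inside $x\in S_\cF$, we have $U\in\ccL(S_\cF)$, and $H$ has the same order as $g(U)$; this contradicts the hypothesis. Hence no forbidden pattern occurs in $G(x)$, i.e.\ $G(x)\in S_\cF$.

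For the direction $(\Rightarrow)$, I would assume $G(S_\cF)\subseteq S_\cF$ and take any $U\in\ccL(S_\cF)$ for which $g(U)$ is defined (order at least $\mathbf{2r+1}$) together with any $H\in\cF$ of the same order as $g(U)$, arguing by contradiction that $g(U)\ne H$. By definition of the language there exist $x\in S_\cF$ and a position $j$ with $M^j_{h'}(x)=U$, where $h'$ is the order of $U$. Reading the bridging identity with $a=j$, $b=h'$ yields $g(U)=M^{j+\rr}_{h'-2\rr}(G(x))$, so the assumption $g(U)=H$ would force $H\in\cF$ to occur in $G(x)$, whence $G(x)\notin S_\cF$. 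This contradicts $x\in S_\cF$ together with $G(S_\cF)\subseteq S_\cF$, so $g(U)\ne H$.

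The only genuine obstacle is pinning down the index and order bookkeeping in the bridging identity: matching the reference positions and orders of $M^i_h(G(x))$, of the enlarged window $M^{i-\rr}_{h+2\rr}(x)$, and of the output of $g$, and checking that the shift used to invoke the definition of $g$ on finite matrices is compatible with shift-commutativity. Once that identity is stated cleanly, both implications are one-line consequences. I would also remark that finiteness of $\cF$ (the SFT hypothesis) plays no role in the equivalence itself; it only ensures that the stated pattern condition can be checked among finitely many forbidden patterns, which is what matters for the later decidability applications.
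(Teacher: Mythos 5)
Your proposal is correct and takes essentially the same route as the paper's proof: both directions rest on identifying $g$ applied to a window of $x\in S_\cF$ with the corresponding finite portion of $G(x)$, which the paper uses implicitly and you make explicit as the ``bridging identity'' $g\bigl(M^a_b(x)\bigr)=M^{a+\rr}_{b-2\rr}(G(x))$. Your extra remark that finiteness of $\cF$ is irrelevant to the equivalence is also accurate.
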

\begin{proof}
Suppose that $G(S_\cF)\subseteq S_\cF$. Choose arbitrarily
$H\in\cF$ and $U\in\ccL(S_\cF)$, with $g(U)$ and $H$ of the same
order. Let $x\in S_\cF$ containing the matrix $U$. Since $G(x)\in
S_\cF$, then $g(U)\in\ccL(S_\cF)$, and so $g(U)\neq H$. Conversely,
if $x\in S_\cF$ and $G(x)\notin S_\cF$, then there exist
$U\in\ccL(S_\cF)$ and $H\in\cF$ with $g(U)= H$.
\end{proof}}
\NEWa{The following proposition gives a sufficient and necessary
  condition under which the action of a CA $G$ on
  configurations of the $G$-invariant subshift $S_K = \C$ preserves
  any column whose cells have the same value.}

\NEW{\begin{lemma}\label{lemm:infinite-pres} Let $G$ be a
    $(d+1)$-dimensional CA with state set $\{0,1\}$ and $S_K$ be the
    subshift representing SA configurations. The following two statements are
    equivalent:
\begin{enumerate}[$(i)$]
\item for any $x\in S_K$ with $x_{(0,\ldots, 0,i)}=1$ (resp.,
$x_{(0,\ldots, 0, i)}=0$) for all $i\in\Z$, it holds that
 $G(x)_{(0,\ldots, 0,i)}=1$ (resp., $G(x)_{(0,\ldots, 0,i)}=0$) for all
 $i\in\Z$.
\item for any matrix $U\in\Mrd\cap \ccL(S_K)$ with  $U_{(r+1,\ldots, r+1,
k)}=1$ (resp., $U_{(r+1,\ldots, r+1,k)}=0$) and any $k\in \inter{1}{2r+1}$, it
holds that $g(U)=1$ (resp., $g(U)=0$).
 \end{enumerate}
\end{lemma}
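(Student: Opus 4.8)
The statement is a local-to-global equivalence in the spirit of Hedlund's theorem, and the whole proof rests on one geometric feature of the forbidden pattern: $K$ has horizontal width $1$ and vertical height $2$, so that ``avoiding $K$'' is a condition imposed independently on each vertical column $\set{(i,k):k\in\Z}$. The plan is to prove the two implications separately, treating the $1$ and the $0$ cases in perfect symmetry throughout; I describe the $1$ case, the $0$ case being obtained by exchanging the two symbols.

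For $(ii)\Rightarrow(i)$, let $x\in S_K$ satisfy $x_{(0,\ldots,0,i)}=1$ for all $i\in\Z$. For each $n\in\Z$ the definition of the global rule gives
\[
  G(x)_{(0,\ldots,0,n)} = g\!\left(M^{(0,\ldots,0,n)-\rr}_{\mathbf{2r+1}}(x)\right).
\]
First I would observe that the neighborhood $U^{(n)}:=M^{(0,\ldots,0,n)-\rr}_{\mathbf{2r+1}}(x)$ is a finite portion of $x$, hence lies in $\ccL(S_K)$. A routine index computation then shows that its central column $U^{(n)}_{(r+1,\ldots,r+1,\kappa)}$, for $\kappa\in\inter{1}{2r+1}$, equals $x_{(0,\ldots,0,\,n-r+\kappa-1)}$, i.e.\ it records exactly the values of $x$ along the central column at heights in $\inter{n-r}{n+r}$; these are all $1$ by hypothesis. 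Thus $U^{(n)}\in\Mrd\cap\ccL(S_K)$ has constant central column, and $(ii)$ yields $g(U^{(n)})=1$, that is, $G(x)_{(0,\ldots,0,n)}=1$. As $n$ was arbitrary, $(i)$ follows.

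For $(i)\Rightarrow(ii)$, let $U\in\Mrd\cap\ccL(S_K)$ have constant central column equal to $1$. Since $U$ is in the language there is some $y\in S_K$ containing $U$, and by shift-invariance of $S_K$ I may translate it to obtain $x'\in S_K$ whose neighborhood at the origin is exactly $U$, i.e.\ $M^{-\rr}_{\mathbf{2r+1}}(x')=U$. The key construction is then to build from $x'$ a configuration with a \emph{globally} constant central column: define $x$ by $x_{(0,\ldots,0,k)}=1$ for every $k\in\Z$ and $x_j=x'_j$ for every $j$ off the central column. Applying $(i)$ to $x$ will force $G(x)$ to have constant central column, and reading off the origin gives $g(U)=G(x)_{(0,\ldots,0,0)}=1$, as required.

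The only nontrivial point, and the step I expect to be the main obstacle, is to verify that this overwriting is legitimate, namely that $x\in S_K$ and that $M^{-\rr}_{\mathbf{2r+1}}(x)$ is still $U$. Both facts rely on the geometry of $K$. Because $K$ occupies a single horizontal position, any occurrence of $K$ lives inside one vertical column; the modified (now constant) central column contains no $K$, and all other columns coincide with those of $x'\in S_K$, so $x\in S_K$. For the second point, within the window defining $M^{-\rr}_{\mathbf{2r+1}}(x)$ the central column of $x'$ already carried the value $1$ (the central column of $U$ is constant) and only the central column was altered, so the window is unchanged and $M^{-\rr}_{\mathbf{2r+1}}(x)=M^{-\rr}_{\mathbf{2r+1}}(x')=U$. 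This closes the argument, the $0$ case being entirely symmetric.
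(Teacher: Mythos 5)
Your proof is correct and follows essentially the same two-implication scheme as the paper's: the $(ii)\Rightarrow(i)$ direction by applying the local rule along the central column, and the $(i)\Rightarrow(ii)$ direction by extending $U$ to a configuration of $S_K$ with a globally constant central column. The only difference is that where the paper simply asserts the existence of a configuration $x\in S_K$ with constant central column satisfying $\Mr^{-\rr}(x)=U$, you justify it explicitly via the column-overwriting construction (legitimate precisely because the forbidden pattern $K$ occupies a single column), thereby filling in a detail the paper leaves implicit.
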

\begin{proof}
  Suppose that $(1)$ is true. Let $U\in\Mrd\cap \ccL(S_K)$ be a matrix
  with $U_{(r+1,\ldots, r+1, k)}=1$ and let $x\in S_K$ be a
  configuration such that $x_{(0,\ldots, 0,i)}=1$ for all $i\in\Z$ and
  $\Mr^{-\rr}(x)=U$. Since $G(x)_{(0,\ldots, 0,i)}=1$ for all
  $i\in\Z$, and $\Mr^0(x)=U$, then $g(U)=1$. Conversely, let $x\in
  S_K$ with $x_{(0,\ldots, 0,i)}=1$ for all $i\in\Z$. By
  shift-invariance, we obtain $G(x)_{(0,\ldots, 0,i)}=1$ for all
  $i\in\Z$.
\end{proof}
}

Lemmas~\ref{lemm:subshift-inv} and~\ref{lemm:infinite-pres}
immediately lead to the following conclusion.
\begin{proposition}
It is decidable to check whether a given ($d+1$)-dimensional CA
corresponds to a $d$-dimensional SA.
\end{proposition}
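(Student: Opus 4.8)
The plan is to reduce the question to the two local, effectively checkable conditions isolated in Lemmas~\ref{lemm:subshift-inv} and~\ref{lemm:infinite-pres}, via the characterization of Theorem~\ref{th:charac}. First I would fix what ``corresponds to a SA'' means: through the coding isometry $\zeta:\C\to S_K$ of Proposition~\ref{prop:sub}, a $(d+1)$-dimensional CA $G$ on the alphabet $\{0,1\}$ represents a $d$-dimensional SA exactly when $G(S_K)\subseteq S_K$ (so that $G$ restricts to $S_K$) and the conjugate map $F=\zeta^{-1}\circ G|_{S_K}\circ\zeta:\C\to\C$ is the global rule of a SA, in the sense of diagram~\eqref{cd}. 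If $G$ is not built on $\{0,1\}$ one answers negatively at once.

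The key simplification is that most hypotheses of Theorem~\ref{th:charac} come for free. Since $G$ is a CA and $\zeta$ is a homeomorphism, $F$ is automatically continuous. Moreover $\zeta$ intertwines the horizontal shifts with $\sigma^{(k,0)}$ and the raising map with the vertical shift, namely $\zeta\circ\sigma^k=\sigma^{(k,0)}\circ\zeta$ and $\zeta\circ\rho=\sigma^{(0,\ldots,0,-1)}\circ\zeta$; as $G$ commutes with every shift of $\Z^{d+1}$, the conjugate $F$ is automatically shift- and vertical-commuting. Hence, by Theorem~\ref{th:charac}, the only nontrivial requirement on $F$ is that it be infinity-preserving. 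So $G$ represents a SA iff $(a)$ $G(S_K)\subseteq S_K$ and $(b)$ $F$ is infinity-preserving, and it remains to show that both $(a)$ and $(b)$ are decidable.

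For $(a)$ I would invoke Lemma~\ref{lemm:subshift-inv}: $S_K=S_{\{K\}}$ is an SFT, so invariance is equivalent to $g(U)\neq K$ for all admissible patterns $U$. Because $K$ has a fixed order and $g$ has fixed radius, only finitely many $U$ of bounded order can produce an image of $K$'s order; testing each one, both the admissibility $U\in\ccL(S_K)$ (that is, that $U$ avoids $K$) and the inequality $g(U)\neq K$, is a finite computation, so $(a)$ is decidable. For $(b)$ I would argue that, under the already-established continuity and vertical-commutativity, infinity-preserving is equivalent to its ``forward'' half --- all-$1$ columns map to all-$1$ columns and all-$0$ to all-$0$ --- which by shift-invariance is exactly the first item of Lemma~\ref{lemm:infinite-pres}; by that lemma it is in turn equivalent to its second item, a finite condition on $g$ over the matrices $U\in\ccL(S_K)$ of the CA window. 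Since a procedure that checks $(a)$ and then $(b)$ decides the question, the proposition follows.

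The step I expect to be most delicate is the reduction in $(b)$: verifying that the full infinity-preserving property really collapses to the one-sided, locally testable condition of Lemma~\ref{lemm:infinite-pres}. The converse implication --- that a finite pile cannot be sent to $\pm\infty$ --- is not literally contained in the lemma and must be recovered from continuity together with vertical-commutativity: lowering (resp.\ raising) a configuration by $\rho^{-n}$ (resp.\ $\rho^{n}$) drives every finite column to $-\infty$ (resp.\ $+\infty$) while fixing the infinite ones, and since $F(\rho^{-n}(x))_i=F(x)_i-n$ a finite-to-$+\infty$ collapse would keep the image column constantly all-$1$; passing to the limit with the continuity of $G$ then forces the limit output column to be all-$1$, contradicting the forward behaviour applied to the all-$0$ limit column. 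One must also be mindful that Remark~\ref{rem-proj} forbids a naive use of the projections $\pi_i$ at infinite heights, so this limiting argument has to be phrased cellwise on the $\zeta$-images rather than through $\pi_i$.
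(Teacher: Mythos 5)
Your proposal is correct and takes essentially the same route as the paper, whose entire proof is the single remark that Lemmas~\ref{lemm:subshift-inv} and~\ref{lemm:infinite-pres} immediately yield the conclusion: like the paper, you reduce the question to the finite, effectively checkable pattern conditions of those two lemmas (invariance of $S_K$, and the column-preservation condition on $g$). Your write-up is in fact more complete than the paper's one-line proof: the organization through Theorem~\ref{th:charac} (observing that continuity, shift- and vertical-commutativity of $\zeta^{-1}\circ G\circ\zeta$ are automatic) and, especially, the $\rho^{\pm n}$ limiting argument showing that the one-sided condition~$(i)$ of Lemma~\ref{lemm:infinite-pres} already forces the full two-sided infinity-preserving property, are precisely the steps hidden behind the paper's word ``immediately''.
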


\subsection{Some dynamical behaviors}
\NEWa{SA are very interesting dynamical systems, which in some sense
  ``lie'' between $d$-dimensional and $d+1$-dimensional CA. Indeed, we
  have seen in the previous section that the latter can simulate
  $d$-dimensional SA, which can, in turn, simulate $d$-dimensional CA.
  For the dimension $d=1$, a classification of CA in terms of their
  dynamical behavior was given in \cite{ku97}. Things are very
  different as soon as we get into dimension $d=2$, as noted in
  \cite{noexpansive2DCA,theyssier-sablik-arxiv}. The question is now
  whether the complexity of the SA model is closer to that of the
  lower or the higher-dimensional CA.}

\ignore{
One-dimensional SA are very interesting models, which complexity
 lies somewhere between one-dimensional and two-dimensional CA.
Indeed we have seen in the previous section that the latter can
simulate SA, and it was shown in \cite{CFM07} that SA could
simulate the former. A classification of one-dimensional cellular
automata in terms of their dynamical behavior was given in
\cite{ku97}. Things are very different as soon as we get into the
second dimension, as noted in
\cite{noexpansive2DCA,theyssier-sablik-arxiv}. The question is now
whether the complexity of the SA model is closer to that of the
lower or the higher-dimensional CA.
}
Let $(X,m)$ be a metric space and let $H:X\to X$ be a continuous
application.
An element
$x\in X$ is an \emph{equicontinuity} point for $H$ if for any
$\varepsilon>0$, there exists $\delta>0$ such that for all $y\in
X$, $m(x,y)<\delta$ implies that $\forall n\in\n$,
$m(H^n(x),H^n(y))<\varepsilon$. The map $H$ is
\emph{equicontinuous} if for any $\varepsilon>0$, there exists
$\delta>0$ such that for all $x,y\in X$, $m(x,y)<\delta$ implies
that $\forall n\in\n$, $m(H^n(x), H^n(y))<\varepsilon$.
If $X$ is
compact, $H$ is equicontinuous iff all elements of $X$ are
equicontinuity points.
An element $x\in X$ is \emph{ultimately
periodic} for $H$ if there exist two integers $n\geq 0$ (the
preperiod) and $p>0$ (the period) such that $H^{n+p}(x)=H^n(x)$.
$H$ is \emph{ultimately periodic} if there exist $n\ge0$ and $p>0$
such that $H^{n+p}=H^n$.
$H$ is \emph{sensitive} (to the initial conditions) if there is a
constant $\varepsilon>0$ such that for all points $x\in X$ and all
$\delta>0$, there is a point $y\in X$ and an integer $n\in\n$ such
that $m(x,y)<\delta$ but $m(F^n(x),F^n(y))>\varepsilon$. $H$ is
\emph{positively expansive} if there is a constant $\varepsilon>0$ such
that for all distinct points $x,y\in X$, there exists $n\in\n$
such that $m(H^n(x),H^n(y))>\varepsilon$.

The topological conjugacy between a SA and some CA acting on the
special subshift $S_K$ helps to adapt some properties of CA. In
particular, the following characterization of equicontinuous CA
can be adapted from Theorem~4 of \cite{ku97}.

\ignore{
\begin{lemma}\label{lem:ssintnv}
The only closed shift-invariant subset of $\C$ which
has nonempty interior is the whole space $\C$.
\end{lemma}
\begin{proof}
  If $\Sigma\subset\C$ is such a subset, then it contains some ball
  \TODO{balls}{les definir}
  $\ball{U}$ where $U$ is some cylinder. Hence it contains
  $\overline{\bigcup_{k\in\Z^d}\sigma^k(\ball U)}$, which is the
  complete space.
\end{proof}

\begin{lemma}\label{lem:baire}
Any covering $\C=\bigcup_{k\in\N}\Sigma_k$ by closed shift-invariant
subsets $\Sigma_k$ contains $\C=\Sigma_k$ for some $k\in\N$.
\end{lemma}
\begin{proof}
If $\C=\bigcup_{k\in\N}\Sigma_k$ where the $\Sigma_k$ are closed, then
by the Baire Theorem, some $\Sigma_k$ has nonempty interior.
Hence, it contains some ball \TODO{balls}{les definir}
  $\ball{U}$ where $U$ is a cylinder. If it is
shift-invariant, then it contains
  $\overline{\bigcup_{k\in\Z^d}\sigma^k(\ball U)}$, which is the
  complete space.
\end{proof}

\begin{lemma}\label{lem:varb}
If $F$ is an equicontinuous SA, then the variation of a pile is bounded by the measuring device in an initial neighborhood, \ie there is some integer $l$ such that all configurations $x\in\C$ with $|x_0|=0$ satisfy
\[\forall n\in\N,|F(x)_0|\le\max_{\begin{subarray}{c}|i|\le l\\|x_i|<\infty\end{subarray}}|x_i|\enspace.\]
\end{lemma}
\begin{proof}
If $F$ is equicontinuous, in particular, for $\varepsilon=2^0$, there exists $\delta=2^{-l}$ such that for all $x,y\in\C$, if $C_l^0(x)=C_l^0(y)$, then $\forall n\in\N,C_0^0(F^n(x))=C_0^0(F^n(y))$.
First, consider $y$ a configuration that has \emph{infinite $l$-neighborhood}, \ie $\forall i\in\inter{-l}l,y_i\notin\inter{-l}l$. Let $z$ defined by $z_i=+\infty$ if $y_i\ge0$ and $z_i=-\infty$ if $y_i<0$, in such way that $C_l^0(y)=C_l^0(z)$. Then $\forall n\in\N,C_0^0(F^n(y))=C_0^0(F^n(z))=C_0^0(z)$, \ie $F^n(y)_0<-l\Leftrightarrow y_0<-l$ and $F^n(y)_0>l\Leftrightarrow y_0>l$.
\\
Now, let $x$ a configuration such that $x_0=0$ and $m=\max_{\begin{subarray}{c}|i|\le l\\|x_i|<\infty\end{subarray}}|x_i|$. Notice that $\rho^{l+m+1}(x)$ has infinite $l$-neighborhood, since $x_i\le m$ or $x_i=+\infty$ for $|i|\le l$. Hence, as seen before, $\forall n\in\N,F^n(x)_0\le m$. A symmetrical reasoning on $\rho^{-l-m-1}(x)$ gives $\forall n\in\N,|F^n(x)_0|\le m$.
\end{proof}
}

\begin{proposition}\label{prop:ult-equiv}
If $F$ is a SA, then the following statements are equivalent:
\begin{enumerate}
\item\label{i-feq} $F$ is equicontinuous.
\item\label{i-fup} $F$ is ultimately periodic.
\item\label{i-pup} \NEWa{All configurations of \C are ultimately periodic for $F$}.
\end{enumerate}
\end{proposition}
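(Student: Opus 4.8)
The plan is to prove the cycle of implications linking the three properties, using the second statement (ultimate periodicity of the map $F$) as the hub. One implication is immediate: if $F^{n+p}=F^n$ as maps, then $F^{n+p}(x)=F^n(x)$ for every $x$, so the second statement forces the third. It remains to establish three substantive implications, of which equicontinuity $\Rightarrow$ ultimate periodicity is the main obstacle, precisely because equicontinuity only controls the \emph{bounded} part of each pile.

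For equicontinuity $\Rightarrow$ ultimate periodicity, I would first rephrase equicontinuity via ground cylinders (using the Remark expressing $\dist$ through the $C^0_r$): for every $l$ there is $r=r(l)$ such that $C^0_r(x)=C^0_r(y)$ forces $C^0_l(F^n(x))=C^0_l(F^n(y))$ for all $n$. Fix $l=0$ with its corresponding $r$. Since the set $\Mrd$ of radius-$r$ ground cylinders is \emph{finite}, each map $x\mapsto C^0_0(F^n(x))$ factors through $C^0_r(x)$ as a map $\psi_n$ between two finite sets; hence $(\psi_n)_{n\ge0}$ takes finitely many values and there are $n_1<n_2$ with $\psi_{n_1}=\psi_{n_2}$, i.e. $C^0_0(F^{n_1}(x))=C^0_0(F^{n_2}(x))$ for all $x$, and by shift-commutativity this coincidence holds at every position $i$. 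The main obstacle is that a radius-$0$ ground cylinder records only the clamped (sign) value $\obrx{i}$ at precision $0$, not the true height, so this does not yet yield $F^{n_1}=F^{n_2}$. Here vertical-commutativity and infinity-preservation do the work: for a finite pile, apply the coincidence to $z=\rho^{-h}(x)$ with $h=F^{n_1}(x)_i$; since $F^{n_1}(z)_i=F^{n_1}(x)_i-h=0$ is read as $0$ in the cylinder, the coincidence forces $F^{n_2}(z)_i=F^{n_2}(x)_i-h=0$, whence $F^{n_1}(x)_i=F^{n_2}(x)_i$; the case $x_i=\pm\infty$ is settled directly by infinity-preservation. This gives $F^{n_1}=F^{n_2}$, so $F^n=F^{n+p}$ with $p=n_2-n_1>0$.

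For pointwise ultimate periodicity $\Rightarrow$ ultimate periodicity of $F$, I would run a Baire argument. For each $n\ge0$ and $p>0$ set $\Sigma_{n,p}=\set{x\in\C: F^{n+p}(x)=F^n(x)}$; these are closed (equalizers of the continuous maps $F^{n+p},F^n$, continuity coming from Theorem~\ref{th:charac}) and shift-invariant (as $F$ commutes with $\sigma$), and by hypothesis they cover $\C$. Since $\C$ is compact by Proposition~\ref{prop:C-comp-etc}, hence Baire, some $\Sigma_{n,p}$ has nonempty interior and therefore contains a ground-cylinder ball. A closed shift-invariant set containing such a ball must be all of $\C$: translating the ball by shifts $\sigma^k$ with $|k|$ large pushes its defining window arbitrarily far from the origin, so the closure of the union of these translates meets every cylinder and thus equals $\C$. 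Hence $\Sigma_{n,p}=\C$, i.e. $F^{n+p}=F^n$. Finally, ultimate periodicity of $F$ $\Rightarrow$ equicontinuity is the easy converse: $F^{n+p}=F^n$ makes $\set{F^m:m\ge0}=\set{F^0,\ldots,F^{n+p-1}}$ a \emph{finite} family of maps, each uniformly continuous on the compact space $\C$, and taking the least of their moduli of continuity produces a single modulus valid for all iterates, which is exactly equicontinuity.
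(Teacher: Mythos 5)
Your proof is correct, and three of your four implications coincide with the paper's ((2)$\Rightarrow$(3) trivially, (3)$\Rightarrow$(2) by a Baire category argument, (2)$\Rightarrow$(1) by finiteness of the set of distinct iterates); but for the hard direction (1)$\Rightarrow$(2) you take a genuinely different route. The paper transfers the problem to the conjugated $(d+1)$-dimensional CA $G$ acting on $S_K$ (and only writes the case $d=1$): it uses the fixed ``step'' configurations $\zeta(c)$, $c\in\set{-\infty,+\infty}^{\Z}$, builds from each central pattern $U$ of order $\mathbf{2l+1}$ an ultimately periodic configuration $x^U$, argues via equicontinuity that every configuration sharing that central pattern inherits its preperiod $n_U$ and period $p_U$, and concludes with $n=\max n_U$ and $p=\mathrm{lcm}\,p_U$ over the finitely many patterns. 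You instead stay entirely inside $\C$: equicontinuity at precision $0$ makes $x\mapsto C^0_0(F^n(x))$ factor through the finite set of radius-$r$ ground cylinders, pigeonhole on the finitely many induced maps $\psi_n$ gives $n_1<n_2$ with $C^0_0(F^{n_1}(x))=C^0_0(F^{n_2}(x))$ for every $x$, shift-commutation propagates this coincidence to every site, and the decisive step --- replacing $x$ by $\rho^{-h}(x)$ with $h=F^{n_1}(x)_i$, which vertical commutation permits, and handling infinite piles by infinity-preservation --- upgrades agreement of the clamped precision-$0$ reading to the exact equality $F^{n_1}(x)_i=F^{n_2}(x)_i$, hence $F^{n_1}=F^{n_2}$. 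Your route is more elementary (no detour through the CA conjugacy), works verbatim in every dimension where the paper only sketches $d=1$, and extracts the period by pure counting; what the paper's route buys is a demonstration of the conjugacy machinery that is the theme of its section. A further small point in your favor: in (3)$\Rightarrow$(2) you justify ``a closed shift-invariant subset of $\C$ with nonempty interior equals $\C$'' by an explicit gluing of two distant windows, whereas the paper invokes ``the only subshift with nonempty interior is the full space'', a statement which, read inside $S_K$ rather than inside the full shift, rests on precisely the irreducibility property that your gluing makes explicit.
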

\begin{proof}
\ref{i-pup}$\Rightarrow$\ref{i-fup}: For any $n\geq 0$ and $p>0$,
let $D_{n,p}=\set{x: F^{n+p}(x) = F^n(x)}$. Remark that
$\C=\bigcup_{n,p\in\N}D_{n,p}$ is the union of these closed
subsets. As $\C$ is complete of nonempty interior, by the Baire
Theorem, there are integers $n, p\in\N$ for which the set
$D_{n,p}$ has nonempty interior. Hence the conjugate image
$\zeta(D_{n,p})$ has nonempty interior too, and it can easily be
seen that it is a subshift. It is known that the only subshift
with nonempty interior is the full space; hence $D_{n,p}=\C$.
\\
\ref{i-fup}$\Rightarrow$\ref{i-pup}: obvious.
\\
\ref{i-fup}$\Rightarrow$\ref{i-feq}: Let $F$ be ultimately
periodic with $F^{n+p}=F^n$ for some $n\geq 0$, $p>0$. Since $F,
F^2,\ldots, F^{n+p-1}$ are uniformly continuous maps, for any
$\varepsilon>0$ there exists $\delta>0$ such that for all
$x,y\in\C$ with $\dist(x,y)<\delta$, it holds that $\forall
q\in\n$, $q<n+p$, $\dist(F^q(x),F^q(y))<\varepsilon$. Since for
any $t\in\n$ $F^t$ is equal to some $F^q$ with $q<n+p$, the map
$F$ is equicontinuous.
\\
\ref{i-feq}$\Rightarrow$\ref{i-fup}: For the sake of simplicity,
we give the proof for a given one-dimensional equicontinuous SA
$F$. Let $G$ be the global rule of the two-dimensional CA whose
action on $S_K$ is conjugated to $F$. By Definition~\ref{newdist},
and since the diagram~\ref{cd} commutes, the map $G:S_K\to S_K$ is
equicontinuous w.r.t. $\dist_T$. So, for $\varepsilon=1$, there
exists $l\in\n$ such that for all $x,y\in S_K$, if
$\Ml^{-\lu}(x)=\Ml^{-\lu}(y)$, then for all $t\in\n$,
$G^t(x)_{0}=G^t(y)_{0}$. Consider now configurations $\zeta(c)$,
where $c\in\{-\infty, +\infty\}^{\z}$ has either the form
$(\ldots, -\infty,-\infty, +\infty, +\infty,\ldots)$ or $(\ldots,
+\infty,+\infty, -\infty, -\infty,\ldots)$. Since every $\zeta(c)$ are
ultimately periodic (with pre\-pe\-riod $n=0$ and period $p=1$) and
$G$ is equicontinuous, for any $k\in\z^2$ and any $y\in S_K$ with
$\Ml^{k-\lu}(y)=\Ml^{k-\lu}(\zeta(c))$, it holds that the sequence
$\{G^t(y)_{k}\}_{t\in\n}$ is ultimately periodic. For any
$U\in\ccL(S_K)\cap \ensuremath{\mathcal{M}^2_{\mathbf{2l+1}}}$,
let $x^U$ be the configuration such that $\Ml^{-\lu}(x)=U$,
$x_{(i,j)}=0$ if $-l\leq i\leq l$ and $j>l$, and $x_{(i,j)}=1$
otherwise. Except for the finite central region, $x^U$ is made by the
repetition of a finite number of matrices appearing inside
configurations $\zeta(c)$. Hence, $x^U$ is an ultimately periodic
configuration with some preperiod $n_U$ and period $p_U$. Then,
for any $y\in S_K$ with $\Ml^{-\lu}(y)=U$, the sequence
$\{G^t(y)_{0}\}_{t\in\n}$ is ultimately periodic with preperiod
$n_U$ and period $p_U$. Set $n=\max\{n_U: U\in\ccL(S_K)\cap
\ensuremath{\mathcal{M}^2_{\mathbf{2l+1}}}\}$ and
$p=\text{lcm}\{p_U: U\in\ccL(S_K)\cap
\ensuremath{\mathcal{M}^2_{\mathbf{2l+1}}}\}$ where lcm is the
least common multiple. Thus, for any configuration $z\in S_K$, we
have that $G^n(z)_0=G^{n+p}(z)_0$. By shift-invariance, we obtain
$\forall k\in\z^2$, $G^n(z)_k=G^{n+p}(z)_k$. Concluding, $G$ is
ultimately periodic and then $F$ is too.
\end{proof}

In \cite{ku97} is presented a classification of CA into four
classes: equicontinuous CA, non equicontinuous CA admitting an
equicontinuity configuration, sensitive but not positively
expansive CA, positively expansive CA. This classification is no
more relevant in the context of SA since the class of positively
expansive SA is empty. This result can be related to the absence
of positively expansive two-dimensional CA (see
\cite{noexpansive2DCA}), though the proof is much different.
\begin{proposition}
There are no positively expansive SA.
\end{proposition}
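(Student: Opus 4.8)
The plan is to contradict the definition of positive expansivity directly: for \emph{every} candidate expansivity constant $\varepsilon>0$ I will exhibit two distinct configurations whose orbits never move farther apart than $\varepsilon$. The natural candidates are the constant configurations $\cst{c}$, because the vertical-commuting structure of a SA forces them to evolve in a completely rigid way, and this rigidity is exactly what prevents expansion.

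First I would observe that $F$ acts on constant configurations by a uniform vertical translation. Indeed, for $x=\cst{c}$ the range $R^i_r(\cst c)$ is the all-zero range (every measuring device reads $\beta_r^{c}(c)=0$), independently of $c$ and of $i$; hence $f$ returns a single fixed value $v:=f(R^0_r(\cst 0))\in\inter{-r}{r}$ and $F(\cst c)=\cst{c+v}$. Iterating, $F^n(\cst c)=\cst{c+nv}$ for all $n\in\n$. This is the only place where the SA-specific structure (vertical-commuting) enters, and it is what makes the argument differ entirely from the two-dimensional CA case.

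Second, I would compute the distance between two adjacent constant configurations from the ground-cylinder description of $\dist$. This gives $\dist(\cst m,\cst{m+1})=2^{-k(m)}$, where $k(m)=m$ for $m\ge 0$ and $k(m)=|m|-1$ for $m\le -1$; in particular $k(m)\to+\infty$ as $|m|\to\infty$, so adjacent constant configurations become arbitrarily close both high up and deep down. Combining with the previous step, $\dist\big(F^n(\cst c),F^n(\cst{c+1})\big)=2^{-k(c+nv)}$ for every $n$.

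Finally, given the purported expansivity constant $\varepsilon=2^{-N}$, I would choose the starting height $c$ so that the whole forward orbit stays in the region where $k$ is large: if $v\ge 0$ take $c>N$, so that $c+nv\ge c>N$ and $k(c+nv)>N$ for every $n$; if $v<0$ take $c<-(N+1)$, so that $c+nv\le c<-(N+1)$ and again $k(c+nv)>N$. In either case $\cst c\neq\cst{c+1}$ are distinct, yet $\dist(F^n(\cst c),F^n(\cst{c+1}))<2^{-N}=\varepsilon$ for all $n\in\n$, contradicting positive expansivity; hence no SA is positively expansive. The only delicate points are verifying the closed form of $k(m)$ from the definition of the measuring devices, and keeping the orbit $c+nv$ away from $0$ for all $n\ge 0$ — which is precisely why the sign of $v$ dictates whether the initial configuration must be pushed up or down.
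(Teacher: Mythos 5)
Your proof is correct, but it takes a genuinely different route from the paper's. The paper's argument is a two-line application of infinity-preservingness: given a candidate constant $\varepsilon=2^{-k}$, take two distinct configurations $x,y$ that both equal $+\infty$ on the whole block $\inter{-k}{k}^d$ and differ somewhere outside; since infinite piles are preserved by every iterate, the two orbits agree (as $+\infty$) on that block at all times, so $\dist(F^n(x),F^n(y))<2^{-k}$ for every $n$. Your argument never touches infinite piles: you use vertical rigidity (a constant configuration sees the all-zero range, so $F^n(\cst{c})=\cst{c+nv}$ with $v=f$ applied to the zero range) together with the fact that the metric $\dist$ makes height differences far from level $0$ negligible ($\dist(\cst{m},\cst{m+1})=2^{-k(m)}$ with $k(m)\to\infty$ as $|m|\to\infty$), and then you choose the starting height according to the sign of $v$ so that the whole orbit $c+nv$ stays in the region where adjacent constants are $\varepsilon$-close. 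Both proofs are valid and work in any dimension; the paper's is shorter, while yours proves something slightly stronger and arguably more meaningful: positive expansivity fails even for the restriction of $F$ to the $F$-invariant dense set \B of bounded configurations, so the failure is not an artifact of the ``frozen'' infinite piles. Two harmless quibbles: your closed form $k(m)=m$ for $m\ge0$ rests on the paper's ground-cylinder Remark, whereas the raw definition $\dist(x,y)=\dist_T(\zeta(x),\zeta(y))$ gives $k(m)=m+1$ (the Remark has an off-by-one for differences above level $0$); and $\varepsilon$ need not be a power of two, so one should pick $N$ with $2^{-N}<\varepsilon$. Neither affects the argument, since all that is used is that $k(m)\to\infty$.
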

\begin{proof}
Let $F$ a SA and $\delta=2^{-k}>0$. Take two distinct configurations $x,y\in\C$ such that $\forall i\in[-k,k],x_i=y_i=+\infty$. By infinity-preservingness, we get $\forall n\in\n,\forall i\in[-k,k],F^n(x)_i=F^n(y)_i=+\infty$, hence $d(F^n(x),F^n(y))<\delta$.
\end{proof}

An important open question in the dynamical behavior of SA is the
existence of non-sensitive SA without any equicontinuity
configuration. An example for two-dimensional CA is given in \cite{theyssier-sablik-arxiv}, but their method can hardly be adapted for SA.
This could lead to a classification of SA into four classes: equicontinuous, admitting an equicontinuity configuration (but not equicontinuous), non-sensitive without equicontinuity configurations, sensitive.

Another issue is the decidability of these classes. In
\cite{cervelle05}, the undecidability of SA ultimate periodicity was
proved on the particular subsets of finite and periodic configurations. It follows directly that equicontinuity on these subsets is undecidable. The question is still open for the whole configuration space \C.

\ignore{
\begin{proposition}[\cite{cervelle05}]\label{prop:ultind}
It is undecidable to establish if a given SA is ultimately
periodic on finite/periodic configurations.
\end{proposition}
As a by-product of Propositions~\ref{prop:ult-equiv}
and~\ref{prop:ultind}, we obtain the following result.
\begin{corollary}
It is undecidable to establish if a given SA is equicontinuous
w.r.t. the distance $\dist$ on finite/periodic configuration.
\end{corollary}

\begin{definition}[transitivity]
  A sand automaton of global rule $F$ is said to be \emph{transitive}
  if for any open sets $U, V \subset \C$, there exists $t \in \N$ such
  that $F^t(U) \cap V \ne \emptyset$.
\end{definition}

  Remark that, contrary to the other topology, this definition also
  works with infinite values because of the density of $\Z^{\Z^d}$.

\begin{proposition}
  A transitive sand automaton is onto.
\end{proposition}

\begin{proof}
  Let $F$ be the global rule of a transitive sand automaton F. The transitivity of F is equivalent to
  \begin{equation}\label{eq:tr-surj}
    \overline{\bigcup_{t\in\N^*} F^t(\C)} = \C \enspace.
  \end{equation}
  Since $F^{t+1}(\C) \subset F^{t}(\C)$ for all $t \in \N$,
  $\overline{\bigcup_{t\in\N^*} F^t(\C)} = \overline{F(\C)}$. $\C$ is
  compact and $F$ is continuous, hence $F(\C)$ is compact and
  $\overline{F(\C)} = F(\C)$. Therefore using
  Equation~\eqref{eq:tr-surj} we find that $F(\C) = \C$.
\end{proof}

  We are still looking for some example of a transitive sand automaton.
}
\section{The nilpotency problem}\label{sec:nil}

In this section we give a definition of nilpotency for SA. Then, we
prove that nilpotency behavior is undecidable
(Theorem~\ref{th:nil-undec}).

\subsection{Nilpotency of CA}

Here we recall the basic definitions and properties of nilpotent CA.
Nilpotency is among the simplest dynamical behavior that an automaton
may exhibit. Intuitively, an automaton defined by a local rule and
working on configurations (either $\C$ or $\az$) is nilpotent if it
destroys every piece of information in any initial configuration,
reaching a common constant configuration after a while. For CA, this
is formalized as follows.

\begin{definition}[CA nilpotency~\cite{CPY89,K92}]\label{def:canil}
A CA $G$ is nilpotent if
\begin{equation*}
\exists c\in A, \quad \exists N\in\N \quad \forall x\in\az, \quad
\forall n\geq N, \quad G^n(x)=\cst{c}\enspace.
\end{equation*}
\end{definition}

Remark that in a similar way to the proof of
Proposition~\ref{prop:ult-equiv}, \NEW{Definition~\ref{def:canil} can
be restated as follows: a CA is nilpotent if and only if
it is nilpotent for all initial configurations.}
\ignore{
\begin{proposition}\label{prop:N-spreading}
  A CA $G$
  is nilpotent if and only if
  \[
    \exists c\in A, \quad \forall x \in \az, \quad \exists N \in \N,
    \quad \forall n\geq N, \quad G^n(x) = \cst{c}\enspace.
  \]
\end{proposition}
}

Spreading CA have the following stronger characterization.

\begin{proposition}[\cite{CG07}]
  A CA $G$, with spreading state~$0$,
  is nilpotent iff for every $x \in \az$,
  there exists $n \in \N$ and $i \in \Z^d$ such that
  $G^n(x)_i = 0$ (\ie $0$ appears in the evolution of every
  configuration).
\end{proposition}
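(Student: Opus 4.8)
The plan is to prove both implications, the forward one being immediate and the converse resting on a compactness argument driven by the spreading property. First I would handle the easy direction: if $G$ is nilpotent, then $G^n(x)=\cst{c}$ for all $x$ and all $n\ge N$. Applying this to the configuration $\cst{0}$ and using that the spreading state $0$ is in particular quiescent (so $G(\cst{0})=\cst{0}$, hence $G^n(\cst{0})=\cst{0}$ for all $n$) forces $c=0$. Thus $G^N(x)=\cst{0}$ for every $x$, so a $0$ does appear in the orbit of every configuration, which is the stated condition.

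For the converse, the key observation is that the spreading state lets me transport any occurrence of $0$ to the origin in a controlled number of steps. Concretely, if $G^n(x)_i=0$, then because $0$ spreads, every cell $j$ with $|j-i|\le r$ carries a $0$ at step $n+1$, and by induction every cell within infinite-norm distance $tr$ of $i$ carries a $0$ at step $n+t$. Choosing $t$ with $tr\ge|i|$ yields $G^m(x)_0=0$ for some finite $m$ depending on $x$. Hence the hypothesis implies that for every $x\in\az$ there exists $m$ with $G^m(x)_0=0$.

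Next I would set up the compactness step. For each $m\in\N$ let $W_m=\set{x\in\az:G^m(x)_0=0}$. Each $W_m$ is clopen, since $G^m(x)_0$ depends only on the finitely many cells of $x$ within distance $mr$ of the origin, so the defining condition is a cylinder condition. Moreover the $W_m$ are nested increasingly: if $G^m(x)_0=0$, then the origin lies in its own radius-$r$ neighborhood, so by spreading $G^{m+1}(x)_0=0$, whence $W_m\subseteq W_{m+1}$. By the previous paragraph $\az=\bigcup_{m\in\N}W_m$, and since $\az$ is compact and the $W_m$ form an increasing open cover, a finite subcover collapses to a single $N\in\N$ with $W_N=\az$; that is, $G^N(x)_0=0$ for every $x$. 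Finally I would remove the dependence on the origin via shift-commutation: for any $x$ and any $i\in\Z^d$, applying this to $\sigma^i(x)$ and using $G^N\circ\sigma^i=\sigma^i\circ G^N$ gives $G^N(x)_i=G^N(\sigma^i(x))_0=0$. Thus $G^N(x)=\cst{0}$ for all $x$, so $G$ is nilpotent with $c=0$.

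The main obstacle is the converse, and within it the crucial point is that spreading plays two distinct roles: it carries a distant $0$ to the origin in finite time (giving the pointwise covering $\az=\bigcup_m W_m$) and it guarantees the nesting $W_m\subseteq W_{m+1}$ that converts this pointwise cover into a \emph{uniform} bound $N$ by compactness. Without the nesting, a finite subcover of clopen sets would not in general be a single $W_N$, so verifying $W_m\subseteq W_{m+1}$ is the delicate step to get right.
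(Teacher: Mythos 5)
The paper offers no proof of this proposition at all---it is imported verbatim from \cite{CG07}---so there is nothing internal to compare against; judged on its own, your proof is correct and complete. The forward implication is handled properly: quiescence of the spreading state pins the limit state down to $c=0$. In the converse, all three ingredients are in place and correctly justified: the transport of a distant $0$ to the origin by spreading, the clopenness and (crucially, again by spreading, since a cell belongs to its own neighborhood) the nestedness of the sets $W_m=\set{x\in\az : G^m(x)_0=0}$, and the use of compactness plus shift-commutation to convert the pointwise statement into the uniform one. Note that the final conclusion $G^n(x)=\cst{0}$ for all $n\ge N$ (not just $n=N$), which the definition of nilpotency requires, does follow instantly---since $G^n(x)=G^N\bigl(G^{n-N}(x)\bigr)$ and your $N$ is uniform over all configurations---but it deserves one explicit line. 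The only step that silently assumes $r\ge1$ is ``choose $t$ with $tr\ge|i|$'': in the degenerate radius-$0$ case a $0$ never propagates, so transport fails; there, however, the hypothesis applied to the constant configuration $\cst{x_0}$ yields $g^n(x_0)=0$ for some $n$, hence $G^n(x)_0=0$, and the rest of your argument goes through unchanged. With that one-line patch for the trivial edge case, the proof is sound.
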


The previous result immediately leads to the following
equivalence.

\begin{corollary}\label{cor:spreading-nil-equiv}
  A CA of global rule $G$, with spreading state~$0$, is nilpotent if and only if for all configurations $x \in \az$, $\lim_{n\to\infty} \dist_T(G^n(x), \cst{0}) = 0$.
\end{corollary}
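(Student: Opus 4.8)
The plan is to derive both implications directly from the preceding proposition of \cite{CG07}, which characterizes nilpotency of a spreading CA by the appearance of the spreading state $0$ in the orbit of every configuration, combined with the elementary fact that $\cst{0}$ is a fixed point of $G$.

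First I would treat the implication from nilpotency to convergence. Assuming $G$ nilpotent, Definition~\ref{def:canil} provides $c\in A$ and $N\in\N$ with $G^n(x)=\cst{c}$ for every $x\in\az$ and every $n\geq N$. To identify $c$, I would invoke that a spreading state is quiescent, so $G(\cst{0})=\cst{0}$ and hence $G^N(\cst{0})=\cst{0}$; comparing this with $G^N(\cst{0})=\cst{c}$ forces $c=0$. Then for every $x$ and every $n\geq N$ one has $\dist_T(G^n(x),\cst{0})=0$, so the limit is trivially $0$.

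For the converse, suppose $\lim_{n\to\infty}\dist_T(G^n(x),\cst{0})=0$ for every $x\in\az$. I would translate this metric condition into the hypothesis of the preceding proposition: for each $x$ there is an $n$ with $\dist_T(G^n(x),\cst{0})<2^0=1$, which by definition of the Tychonoff metric means that $G^n(x)$ and $\cst{0}$ coincide on the cell indexed by the origin, that is $G^n(x)_0=0$. Hence $0$ appears in the evolution of every configuration (at position $i=0$), and the proposition yields nilpotency of $G$.

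I do not expect any genuine obstacle: the argument is essentially a repackaging of the cited proposition. The only points needing a little care are the determination $c=0$ in the forward direction, which rests on the implication spreading $\Rightarrow$ quiescent and the resulting invariance of $\cst{0}$, and, in the converse, the routine reading of $\dist_T(G^n(x),\cst{0})<1$ as agreement at the origin.
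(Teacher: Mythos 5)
Your proof is correct and follows essentially the same route as the paper, which presents this corollary as an immediate consequence of the preceding proposition from \cite{CG07}: the forward direction pins down $c=0$ via quiescence of the spreading state, and the converse reads $\dist_T(G^n(x),\cst{0})<1$ as $G^n(x)_0=0$ so that the proposition applies. Both points of care you flag are exactly the (routine) content the paper leaves implicit.
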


Recall that the CA nilpotency is undecidable~\cite{K92}. Remark
that the proof of this result also works for the restricted class
of spreading CA.

\begin{theorem}[\cite{K92}]\label{thm:spreading-nil-indec}
  For a given state~$s$, it is undecidable to know whether a
  cellular automaton with spreading state~$s$ is nilpotent.
\end{theorem}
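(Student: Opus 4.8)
The plan is to derive the statement, which is due to Kari, by a reduction from the \emph{domino problem}. By Berger's theorem it is undecidable whether a finite set of Wang tiles tiles the plane $\Z^2$, and this remains true for tile sets that are deterministic in one diagonal direction, say north-west deterministic, where each tile is determined by the colors of its north and west edges. Fix such a tile set $T$ and let $s$ be the prescribed spreading state. I would construct a one-dimensional CA $G$ with state set $A=T\cup\{s\}$, identifying the spatial axis of $G$ with the horizontal axis of the plane and the time axis of $G$ with the vertical axis. The local rule reads a cell together with its neighbors and, whenever the local tiling constraints are consistent, outputs the tile forced below the current one; if any adjacency color mismatch is detected, or if $s$ already occurs in the neighborhood, it outputs $s$. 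Since $s$ is spreading, any locally detected defect is never erased and eventually invades every cell.

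The equivalence to establish is that $G$ is nilpotent if and only if $T$ does not tile the plane. By the characterization of nilpotency for spreading CA (the Proposition from \cite{CG07}, restated as Corollary~\ref{cor:spreading-nil-equiv}), $G$ is nilpotent exactly when the state $s$ appears in the orbit of every configuration, \ie when every space-time diagram of $G$ eventually contains a defect. If $T$ admits a valid tiling $\tau$ of the plane, then the configuration whose cells record one row of $\tau$ has an orbit equal, row by row, to $\tau$; this orbit never produces $s$, so $G$ is not nilpotent. Conversely, if some configuration has an $s$-free orbit, its space-time diagram is a defect-free filling of a half-plane by tiles of $T$ obeying all adjacency constraints, and a standard compactness argument (König's lemma on the tree of consistent finite patterns) extends it to a valid tiling of the whole plane. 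Hence a non-nilpotent $G$ forces $T$ to tile the plane, which is the contrapositive of the remaining direction.

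The main obstacle is the faithful simulation of tilings by a \emph{deterministic} dynamical system. A CA updates all cells simultaneously, so the tile placed below a given cell cannot directly consult the tile to be placed to its west at the same time step; the north-west determinism of $T$ is precisely what resolves this, letting each row be computed from its predecessor by a rule of bounded radius while guaranteeing that a consistent row admits a unique consistent successor. Care is also needed to verify that the error state genuinely spreads and that the compactness step truly recovers a full-plane tiling from an $s$-free orbit. Once these points are handled, the reduction is complete: a decision procedure for nilpotency of $G$ would decide the domino problem for $T$, contradicting its undecidability, which proves the theorem.
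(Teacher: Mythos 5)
The paper itself gives no proof of this theorem: it is quoted directly from Kari~\cite{K92}, with the remark that Kari's construction already produces spreading CA. Your proposal is therefore a reconstruction of Kari's argument, and its skeleton is the right one (reduction from a tiling problem for diagonally deterministic tile sets, the spreading-state characterization of nilpotency as in Corollary~\ref{cor:spreading-nil-equiv}, and a compactness argument to pass from an $s$-free space-time diagram to a tiling of the whole plane). However, your concrete construction has a genuine flaw. With NW-determinism, the tile at $(x,y)$ is determined by its north neighbor $(x,y-1)$ and its west neighbor $(x-1,y)$. If, as you propose, a CA configuration is a horizontal row and time runs downwards, then the tile to be placed below cell $x$ depends on the tile to be placed below cell $x-1$, i.e.\ on a cell of the very row being computed; unfolding the recursion, it depends on the entire half-row to its west, so there is no bounded-radius rule computing ``the tile forced below''. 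Worse, a consistent row does \emph{not} admit a unique consistent successor row: fixing one tile of the new row forces all tiles to its east, but nothing determines the tiles to its west. So the claim in your last paragraph --- that NW-determinism ``lets each row be computed from its predecessor by a rule of bounded radius'' --- is exactly the point where the construction breaks. Kari's fix is to run the CA along \emph{diagonals}: both the north and the west neighbors of a tile at $(x,y)$ lie on the diagonal $\set{(x',y') : x'+y'=x+y-1}$, so if the configuration at time $t$ is the diagonal $x+y=t$ (equivalently, one shears the plane), the new state at position $x$ is the unique tile whose north colour matches the south colour of old cell $x$ and whose west colour matches the east colour of old cell $x-1$, and $s$ otherwise --- a genuine CA rule with neighborhood $\set{-1,0}$. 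With this modification, both directions of your equivalence and the K\"onig/compactness step go through as you describe, the half-plane now being a half-plane in the sheared coordinates.

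A secondary but non-trivial point: the undecidability of the domino problem for NW-deterministic tile sets is \emph{not} a routine strengthening of Berger's theorem, as your phrase ``this remains true'' suggests. Determinism is a severe restriction, and establishing this undecidability is precisely the technical core of~\cite{K92}, where Kari builds a deterministic aperiodic tile set (modifying Robinson's construction) and embeds Turing computations into it. Using that result as a black box is legitimate, but it should be cited as the substantial theorem it is; otherwise the proposal hides the hardest part of the proof inside one unproved sentence.
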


\subsection{Nilpotency of SA}

A direct adaptation of Definition~\ref{def:canil} to SA is vain.
Indeed, assume $F$ is a SA of radius $r$. For any
$k\in\Z^d$, consider the configuration $x^k\in\B$ defined by
$x^k_0=k$ and $x^k_i=0$ for any $i\in\Z^d\backslash\{0\}$.
Since the pile of height $k$ may decrease at most by $r$ during one
step of evolution of the SA, and the other piles may increase at
most by $r$, $x^k$ requires at least $\lceil k/2r \rceil$ steps to
reach a constant configuration. Thus, there exists no common integer
$n$ such that all configurations $x^k$ reach a constant configuration
in time $n$.
This is a major
difference with CA, which is essentially due to the unbounded set of
states and to the infinity-preserving property.

Thus, we propose to label as nilpotent the SA which make every pile
approach a constant value, but not
necessarily reaching it ultimately.
\NEW{This nilpotency notion, inspired by
  Proposition~\ref{cor:spreading-nil-equiv}, is formalized as follows
  for a SA $F$:}
\begin{equation*}
  \exists c\in \Z, \quad \forall x\in\C, \quad \lim_{n\to\infty} \dist(F^n(x), \cst{c}) = 0 \enspace.
\end{equation*}
Remark that $c$ shall not be taken in the full state set $\Zt$, because
allowing infinite values for $c$ would not correspond to the intuitive
idea that a nilpotent SA ``destroys'' a configuration (otherwise, the
raising map would be nilpotent). Anyway, this definition is not
satisfying because of the vertical commutativity: two configurations
which differ by a vertical shift reach two different configurations,
and then no nilpotent SA may exist. A possible way to work around this
issue is to make the limit configuration depend on the initial one:
\begin{equation*}
  \forall x\in\C, \quad \exists c\in \Z, \quad \quad \lim_{n\to\infty} \dist(F^n(x), \cst{c}) = 0 \enspace.
\end{equation*}

\ignore{Note that we can assume that $c$ does
  not depend on $n$, by restricting to SA such that $f(O_d)=0$, where
  $O_d$ is the $d$-dimensional matrix which has only null
  coefficients. }Again, since SA are infinity-preserving, an infinite
pile cannot be destroyed (nor, for the same reason, can an infinite
pile be built from a finite one).  Therefore nilpotency has to involve
the configurations of $\Z^{\Z^d}$, \ie the ones without infinite
piles. Moreover, every configuration $x\in \Z^{\Z^d}$ made of regular
steps (\ie in dimension~$1$, for all $i \in \Z$, $x_i-x_{i-1} =
x_{i+1}-x_i$) is invariant by the SA rule (possibly composing it with
the vertical shift). So it cannot reach nor approach a constant
configuration. Thus, the larger reasonable set on which nilpotency
might be defined is the set of bounded configurations \B. This leads
to the following formal definition of nilpotency for SA.

\begin{definition}[SA nilpotency]\label{def:sanil}
A SA $F$ is nilpotent if and only if
\begin{equation*}
  \forall x\in\B, \quad \exists c\in \Z, \quad \quad \lim_{n\to\infty} \dist(F^n(x), \cst{c}) = 0 \enspace.
\end{equation*}
\end{definition}

The following proposition shows that the class of nilpotent SA is
nonempty.

\begin{proposition}
  The SA \SN from Example~\ref{ex:N} is nilpotent.
\end{proposition}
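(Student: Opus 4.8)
The plan is to work directly with the global rule $\FSN$ rather than through the conjugate CA, and to exhibit for each bounded configuration the explicit limit constant. First I would fix an arbitrary $x\in\B$. Since $x$ is bounded and integer-valued, its piles take only finitely many distinct values, so the infimum $m=\min_{i\in\Z}x_i$ is attained and lies in $\Z$; this $m$ is the candidate constant $c$ of Definition~\ref{def:sanil}. The guiding intuition is that $\FSN$ slides every pile down towards the lowest one: reading off the local rule, $\FSN(y)_i=y_i-1$ when $y_i$ has a strictly lower neighbour and $\FSN(y)_i=y_i$ otherwise. In particular each pile is decreased by $0$ or $1$ at each step, so $\FSN(\B)\subseteq\B$ and the whole orbit stays bounded.

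Next I would establish two monotonicity facts. For every site $j$ the sequence $\bigl(\FSN^n(x)_j\bigr)_{n\in\N}$ is non-increasing, because $\fSN$ only takes the values $0$ and $-1$. Moreover the minimum value $m$ is preserved along the orbit: a pile sitting at the current minimum has no strictly lower neighbour and therefore stays put, while every other pile decreases by at most $1$, so $\min_i\FSN^n(x)_i=m$ for all $n$. Consequently, for each $j$ the integer sequence $\FSN^n(x)_j$ is non-increasing and bounded below by $m$, hence eventually constant, equal to some value $\ell_j\ge m$.

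The crux is to prove that $\ell_j=m$ for every $j$; this is the step I expect to be the main obstacle, and I would handle it by a boundary argument. Suppose, for contradiction, that $S=\set{j\in\Z:\ell_j>m}$ is nonempty. Since the sites realizing the minimum remain at height $m$, the set $S$ is a proper nonempty subset of $\Z$, so it has a boundary site: there exists $i^*$ with $\ell_{i^*}>m$ and an adjacent site, say $i^*+1$, with $\ell_{i^*+1}=m$. Choosing $n$ large enough that both piles have stabilized, I would have $\FSN^n(x)_{i^*}=\ell_{i^*}$ and $\FSN^n(x)_{i^*+1}=m<\ell_{i^*}$. Then the measuring device reading the right neighbour, $\beta_1^{\ell_{i^*}}(m)$, is strictly negative, so $\fSN$ returns $-1$ and $\FSN^{n+1}(x)_{i^*}=\ell_{i^*}-1$, contradicting the stabilization at $\ell_{i^*}$. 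Hence $S=\emptyset$ and every pile eventually equals $m$.

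Finally I would translate pointwise stabilization into convergence for the metric $\dist$. Fixing a radius $r$, the finitely many sites in $\inter{-r}{r}$ each reach the value $m$ after finitely many steps, so for $n$ past the maximum of these times we get $\FSN^n(x)_j=m$ for all $j\in\inter{-r}{r}$, whence $C^0_r(\FSN^n(x))=C^0_r(\cst{m})$ and $\dist(\FSN^n(x),\cst{m})\le 2^{-r}$. As $r$ is arbitrary, $\lim_{n\to\infty}\dist(\FSN^n(x),\cst{m})=0$, which is exactly Definition~\ref{def:sanil} with $c=m$. This shows that $\SN$ is nilpotent.
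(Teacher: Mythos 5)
Your proof is correct, but it reaches the conclusion by a different mechanism than the paper. Both arguments rest on the same two observations: the local rule $\fSN$ only returns $0$ or $-1$, and a pile sitting at the global minimum never moves, so the minimum value $m$ is invariant along the orbit and is the natural candidate for the constant $c$. Where the proofs diverge is in showing that every pile actually reaches $m$. The paper argues constructively: fixing a minimum site $i$, it observes that the adjacent pile decreases by exactly one per step (its neighbour at $i$ stays at $x_i$), hence equals $x_i$ after exactly $x_{i+1}-x_i$ steps, and then propagates outward by induction, producing for each site $j$ an explicit time $n_j$ with $\FSN^{n_j}(x)_j=x_i$. You argue abstractly: each sequence $\bigl(\FSN^n(x)_j\bigr)_{n\in\N}$ is non-increasing, integer-valued and bounded below by $m$, hence stabilizes to some $\ell_j\ge m$, and a boundary contradiction (a stabilized pile adjacent to a strictly lower stabilized pile would still have to drop, since $\beta_1^{\ell_{i^*}}(m)<0$ whether it reads $-1$ or $-\infty$) forces $\ell_j=m$ everywhere. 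The paper's induction buys explicit convergence times; your monotone-limit-plus-rigidity argument dispenses with any bookkeeping about the order in which piles collapse, and for that reason adapts essentially verbatim to any radius and any dimension (where the outward induction requires slightly more care), which is relevant to the paper's closing remark that similar nilpotent SA exist in all dimensions. Your final step, converting pointwise stabilization into convergence for $\dist$ via ground cylinders, matches what the paper leaves implicit.
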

\begin{proof}
  Let $x \in \B$, let $i \in \Z$ such that for
  all $j \in \Z$, $x_j \geq x_i$. Clearly, after
  $x_{i+1} - x_i$ steps, $\FSN^{x_{i+1} - x_i}(x)_{i+1}
  = \FSN^{x_{i+1} - x_i}(x)_{i} = x_i$. By immediate
induction, we obtain that for all $j \in \Z$ there exists $n_j \in
\N$ such that $\FSN^{n_j}(x)_{j} = x_i$, hence
$\lim_{n\to\infty}\dist(\FSN^n(x), \cst{x_i}) = 0$.
\end{proof}

Similar nilpotent SA can be constructed with any radius and in any
dimension.

\subsection{Undecidability}

The main result of this section is that SA nilpotency is
undecidable (Theorem~\ref{th:nil-undec}), by reducing the
nilpotency of spreading CA to it. This emphasizes the fact that the
dynamical behavior of SA is very difficult to predict. We think
that this result might be used as the reference undecidable
problem for further questions on SA.

\medskip
\noindent \textbf{Problem} \textbf{Nil}\\
\noindent \qquad\textsc{instance}: a SA $\A=\langle d, r, \lambda \rangle$;\\
\noindent \qquad\textsc{question}: is \A nilpotent?

\begin{theorem}\label{th:nil-undec}
  The problem \textbf{Nil} is undecidable.
\end{theorem}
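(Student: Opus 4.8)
The plan is to prove undecidability by reduction from the nilpotency of spreading cellular automata, which is undecidable by Theorem~\ref{thm:spreading-nil-indec}. Given a one-dimensional spreading CA $G=\uple{A,1,r_{G},g}$ with spreading (hence quiescent) state $0$ and $A=\set{0,\ldots,q-1}$, I would build, computably and uniformly in $G$, a sand automaton $F$ whose nilpotency in the sense of Definition~\ref{def:sanil} is equivalent to the nilpotency of $G$. The whole difficulty is to reconcile three constraints at once: a SA perceives only \emph{relative} heights (it is vertical-commuting), SA nilpotency is quantified over the \emph{bounded} configurations \B, and the target constant in Definition~\ref{def:sanil} may depend on the initial configuration.

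Concretely, I would encode a CA configuration $c\in A^{\Z}$ by a \emph{bounded} SA configuration in which the state $c_{i}$ is stored as the height of an isolated pile sitting over a flat baseline, the baseline being recognised locally and in a shift-invariant way through a fixed marker pattern; this is an adaptation of the CA-by-SA simulation of~\cite{CFM07}, arranged so that every CA configuration is sent into \B. The point of using this bounded (rather than cumulative) encoding is that the spreading state $0$ corresponds to a flat neighbourhood, so that the all-zero CA configuration $\cst{0}$ corresponds exactly to a \emph{constant} SA configuration, and, more generally, the vanishing of the CA states corresponds to the flattening of the SA configuration toward a constant. The local rule $f$ of $F$ reads the neighbouring states through the range (as successive relative heights), applies $g$, and realises the bounded variation needed to update each marked pile; since $q-1$ is fixed, this fits within a suitable radius $r$.

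With this encoding in hand I would use Corollary~\ref{cor:spreading-nil-equiv}, which states that a spreading CA is nilpotent iff $G^{n}(c)\to\cst{0}$ for every $c$. For the implication ``$G$ nilpotent $\Rightarrow F$ nilpotent'', an encoded configuration flattens because its stored states tend to $0$; the genuinely delicate part, discussed below, is to force \emph{every} bounded configuration to approach a constant, including those that are not legal codes. For the converse I argue contrapositively: if $G$ is not nilpotent, Corollary~\ref{cor:spreading-nil-equiv} together with the spreading property yields a configuration $c^{*}$ whose entire orbit avoids $0$; its encoding $x^{*}\in\B$ then carries, at every time step, two piles of different heights within a bounded window of the origin, so that $\dist(F^{n}(x^{*}),\cst{c})$ stays bounded away from $0$ for every $c\in\Z$. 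Hence $F$ is not nilpotent, and the reduction is correct in this direction.

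The main obstacle is the first implication, namely handling the ``garbage'' configurations of \B that are not valid encodings: when $G$ is nilpotent, Definition~\ref{def:sanil} requires that these too approach a constant. I would deal with this by building a self-correcting, spreading collapse into $f$ — in the spirit of the automaton \SN from Example~\ref{ex:N} — so that any locally illegal pattern triggers a decrease which propagates and erases it, exactly mirroring how errors are absorbed by the spreading state in Kari's construction~\cite{K92}; legal codes must be left untouched by this mechanism so that the simulation of $G$ is preserved, and this interplay is where the bulk of the technical verification lies. A secondary point to check along the way is that for encoded configurations it is really the common height, and not merely the successive differences, that converges, so that the limit is a genuine constant configuration as demanded by Definition~\ref{def:sanil}.
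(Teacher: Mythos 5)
Your proposal is correct and follows essentially the same route as the paper's proof: a reduction from spreading-CA nilpotency using the markers-at-height-zero encoding of CA configurations into \B, an \SN-style collapsing rule added on invalid neighborhoods so that garbage configurations flatten when the CA is nilpotent, and the observation that fixed-height markers make an encoded configuration converge to a genuine constant if and only if the CA orbit converges to $\cst{0}$ (with the converse obtained from the non-flat windows produced by a zero-avoiding orbit). The paper simply carries out the verification you defer, showing that valid blocks persist and extend so that every window eventually becomes valid and is then destroyed by the simulated nilpotent CA.
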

\begin{proof}
  This is proved by reducing \textbf{Nil} to the nilpotency of spreading cellular automata. Remark that it is sufficient to show the result in dimension~$1$. Let \S be a spreading cellular automaton $\S=\uple{A, 1, s, g}$ of global rule $G$, with finite set of integer states $A \subset \N$ containing the spreading state~$0$. We simulate \S with the sand automaton $\A=\uple{1, r = \max (2s, \max A), f}$ of global rule $F$ using the following technique, also developed in~\cite{CFM07}. Let $\xi: A^\Z \to \B$ be a function which inserts markers every two cells in the CA configuration to obtain a bounded SA configuration. These markers allow the local rule of the SA to know the absolute state of each pile and behave as the local rule of the CA. To simplify the proof, the markers are put at height~$0$ (see Figure~\ref{fig:CA-SA}):
\[
  \forall y \in A^\Z, \forall i \in \Z, \quad \xi(y)_i = \left\{\begin{array}{l@{\hspace{5mm}}l}
    0\:\text{(marker)} & \text{if $i$ is odd}\enspace,\\
    y_{i/2} & \text{otherwise.}
    \end{array}\right.
\]
This can lead to an ambiguity when all the states in the
neighborhood of size $4s+1$ are at state $0$, as shown in the
picture. But as in this special case the state~$0$ is quiescent
for $g$, this is not a problem: the state~$0$ is preserved, and
markers are preserved.
\begin{figure}[!ht]
  \begin{center}
  \includegraphics[scale=0.7]{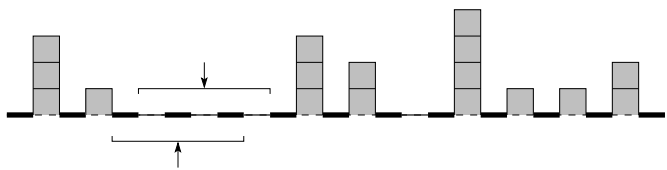}
  \caption{Illustration of the function $\xi$ used in the simulation
    of the spreading CA \S by \A. The thick segments are the markers
    used to distinguish the states of the CA, put at height $0$. There
    is an ambiguity for the two piles indicated by the arrows: with a
    radius 2, the neighborhoods are the same, although one of the
    piles is a marker and the other the state~$0$.}
  \label{fig:CA-SA}
  \end{center}
\end{figure}

The local rule $f$ is defined as follows, for all ranges $R \in
\mathcal{R}^1_r$,
\begin{equation}\label{eq:fvalid}
f(R) = \left\{\begin{array}{lr@{}}
    0 & \text{if $R_{-2s+1}, R_{-2s+3}, \ldots, R_{-1}, R_{1}, \ldots, R_{2s-1} \in A$}\enspace,\\
    \multicolumn{2}{l}{g(R_{-2s}+a, R_{-2s+2}+a, \ldots, R_{-2}+a, a, R_{2}+a, \ldots, R_{2s}+a) - a}\: \\
    \multicolumn{2}{r@{}}{\text{if $R_{-2s+1} = R_{-2s+3} = \cdots = R_{2s-1} = a < 0$ and $-a \in A$}\enspace.}
    \end{array}\right.
\end{equation}
The first case is for the markers (and state $0$) which remain
unchanged, the second case is the simulation of $g$ in the even
piles. As proved in~\cite{CFM07}, for any $y \in A^\Z$ it holds
that $\xi(G(y)) = F(\xi(y))$. The images by $f$ of the remaining
ranges will be defined later on, first a few new notions need to
be introduced.

A sequence of consecutive piles $(x_i, \ldots, x_j)$ from a
configuration $x \in \B$ is said to be \emph{valid} if it is part
of an encoding of a CA configuration, \ie $x_i = x_{i+2} = \cdots
= x_j$ (these piles are markers) and for all $k \in \N$ such that
$0 \leq k < (j-i)/2$, $x_{i+2k+1} - x_i \in A$ (this is a valid
state). We extend this definition to configurations, when
$i=-\infty$ and $j=+\infty$, \ie $x \in \rho^c \circ \xi(A^\Z)$
for a given $c \in \Z$ ($x \in \B$ is valid if it is the raised
image of a CA configuration). A sequence (or a configuration) in
\emph{invalid} if it is not valid.

First we show that starting from a valid configuration, the SA \A is
nilpotent if and only if $\S$ is nilpotent. This is due to the fact
that we chose to put the markers at height~$0$, hence for any valid
encoding of the CA $x = \rho^c \circ \xi(y)$, with $y \in A^\Z$ and $c
\in \Z$,
\[
\lim_{n\to\infty} \dist_T(G^n(y), \cst{0}) = 0 \quad\text{if and
only if}\quad \lim_{n\to\infty} \dist(F^n(x), \cst{c}) =
0\enspace.
\]

It remains to prove that for any invalid configuration, \A is also
nilpotent. In order to have this behavior, we add to the local
rule $f$ the rules of the nilpotent automaton \SN for every
invalid neighborhood of width $4s+1$. For all ranges $R \in
\mathcal{R}^1_r$ not considered in Equation~\eqref{eq:fvalid},
\begin{equation}\label{eq:finvalid}
f(R) = \left\{\begin{array}{r@{\hspace{5mm}}l}
  -1 & \text{if $R_{-r} < 0$ or $R_{-r+1} < 0$ or $\cdots$ or $R_{r} < 0$}\enspace,\\
  0 & \text{otherwise.}
  \end{array}\right.
\end{equation}

Let $x \in \B$ be an invalid configuration. Let $k \in \Z$ be any
index such that $\forall l \in \Z$, $x_l \geq x_k$. Let $i,j \in
\Z$ be respectively the lowest and greatest indices such that $i
\leq k \leq j$ and $(x_i, \ldots, x_j)$ is valid ($i$ may equal
$j$). Remark that for all $n \in \N$, $(F^n(x)_i, \ldots,
F^n(x)_j)$ remains valid. Indeed, the markers are by construction
the lowest piles and Equations~\eqref{eq:fvalid}
and~\eqref{eq:finvalid} do not modify them. The piles coding for
non-zero states can change their state by
Equation~\eqref{eq:fvalid}, or decrease it by~$1$ by
Equation~\eqref{eq:finvalid}, which in both cases is a valid
encoding. Moreover, the piles $x_{i-1}$ and $x_{j+1}$ will reach a
valid value after a finite number of steps: as long as they are
invalid, they decrease by~$1$ until they reach a value which codes
for a valid state. Hence, by induction, for any indices $a, b \in
\Z$, there exists $N_{a,b}$ such that for all $n \geq N_{a, b}$
the sequence $(F^n(x)_a, \ldots, F^n(x)_b)$ is valid.

In particular, after $N_{-2Nr-1, 2Nr+1}$ step, there is a valid
sequence of length $4Nr+3$ centered on the origin (here, $N$ is
the number of steps needed by \S to reach the
configuration~$\cst{0}$, given by
Definition~\ref{def:canil}). Hence, after $N_{-2Nr,
2Nr}+N$ steps, the local rule of the CA \S applied on this valid
sequence leads to 3 consecutive zeros at positions $-1, 0, 1$. All
these steps are illustrated on Figure~\ref{fig:validation}.
\begin{figure}[!ht]
  \begin{center}
  \includegraphics[scale=0.7]{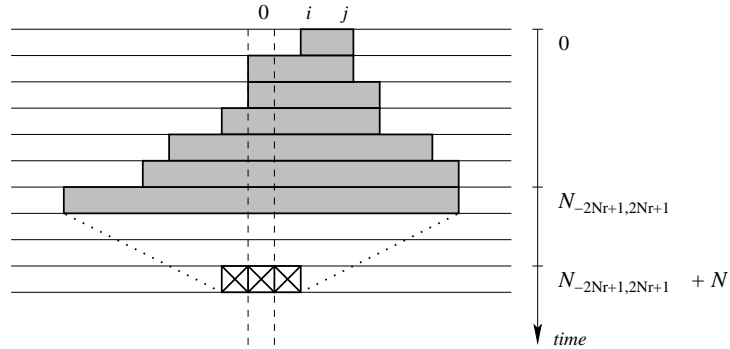}
  \caption{Destruction of the invalid parts. The lowest valid sequence (in gray) extends until it is large enough. Then after $N$ other steps the 3 central piles (hatched) are destroyed because the rule of the CA is applied correctly.}
  \label{fig:validation}
  \end{center}
\end{figure}

Similarly, we prove that for all $n \geq N_{-2Nr-k, 2Nr+k} + N$, the
sequence $(F^n(x)_{-k}, \ldots, F^n(x)_k)$ is a constant sequence
which does not evolve. Therefore, there exists $c \in \Z$ such that
$\lim_{n\to\infty} \dist(F^n(x), \cst{c}) = 0$.  We just proved that
\A is nilpotent, \ie $\lim_{n\to\infty} \dist(F^n(x), \cst{c}) = 0$
for all $x \in \B$, if and only if \S is nilpotent (because of the
equivalence of definitions given by
Corollary~\ref{cor:spreading-nil-equiv}), so \textbf{Nil} is
undecidable (Proposition~\ref{thm:spreading-nil-indec}).
\end{proof}
\section{Conclusion}
In this article we have continued the study of sand automata, by
introducing a compact topology on the SA. In this new context of
study, the characterization of SA functions of \cite{CF03,CFM07} still
holds. Moreover, a topological conjugacy of any SA with a suitable CA
acting on a particular subshift might facilitate future studies about
dynamical and topological properties of SA, as for the proof of the
equivalence between equicontinuity and ultimate periodicity
(Proposition~\ref{prop:ult-equiv}).

Then, we have given a definition of nilpotency. Although it differs
from the standard one for CA, it captures the intuitive idea that a
nilpotent automaton ``destroys'' configurations. Even though nilpotent
SA may not completely destroy the initial configuration, they flatten
them progressively. Finally, we have proved that SA nilpotency is
undecidable (Theorem~\ref{th:nil-undec}). This fact enhances the idea
that the behavior of a SA is hard to predict. We also think that this
result might be used as a fundamental undecidability result, which
could be reduced to other SA properties.

Among these, deciding dynamical behaviors remains a major problem.
Moreover, the study of global properties such as injectivity and
surjectivity and their corresponding dimension-dependent decidability
problems could help understand if $d$-dimensional SA look more like
$d$-dimensional or $d+1$-dimensional CA. Still in that idea is the open
problem of the dichotomy between sensitive SA and those with
equicontinuous configurations. A potential counter-example would give
a more precise idea of the dynamical behaviors represented by SA.

\bibliographystyle{plain}
\bibliography{nilpsa}
\end{document}